\documentclass[a4paper]{article}
\pdfoutput=1
\usepackage{amsfonts,amstext,amsmath,amssymb,stmaryrd,mathrsfs,calc,amsthm,mathtools,bold-extra}
\usepackage{comment}
\includecomment{vlong}\excludecomment{vshort}

\newcommand{\eqdef}{\stackrel{{\text{\tiny def}}}{=}}
\newcommand{\F}{\mathbf{F}}
\newcommand{\FGH}[1]{\mathscr{F}_{\!#1}} %
\newcommand{\overto}[1]{\xrightarrow{\!\!{#1}\!\!}}
\newcommand{\step}[1]{\overto{#1}} %

\renewcommand{\vec}[1]{\mathbf{#1}}
\newcommand{\embeds}{\sqsubseteq}
\newcommand{\embedd}{\sqsupseteq}
\newcommand{\comp}{\fatsemi}
\newcommand{\Nat}{\mathbb N}
\newcommand{\pure}{\mathsf{p}}
\newcommand{\purify}{\pure}
\newcommand{\tup}[1]{\langle #1\rangle}
\newcommand{\sep}{\shortmid}
\newcommand{\pad}{\bot}
\newcommand{\tally}{\#}
\newcommand{\trc}{\circledast}
\newcommand{\tc}{\oplus}
\newcommand{\ks}{\ast}
\newcommand{\kp}{+}
\newcommand{\Fw}{\mathsf{Fw}}
\newcommand{\Bw}{\mathsf{Bw}}
\newcommand{\Sim}{\mathsf{Sim}}
\newcommand{\Init}{\mathsf{Init}}
\newcommand{\Fin}{\mathsf{Fin}}
\newcommand{\Seqs}{\mathrm{Seqs}}
\newcommand{\atomi}{\sqsubset_1}
\newcommand{\atoml}{\sqsupset_1}

\theoremstyle{plain}
\newtheorem{lemma}{Lemma}
\newtheorem{theorem}{Theorem}
\newtheorem{proposition}{Proposition}

\newtheorem{fact}{Fact}
\theoremstyle{definition}

\theoremstyle{remark}

\newtheorem{claim}[theorem]{Claim}

\makeatletter%
\newenvironment{decpb}[2]{%
  \smallskip\par\noindent%
  \textbf{#1}~(#2)\hfill%
  \list{}{%
    \labelwidth\z@ \itemindent-\leftmargin%
    \parskip\z@%
    \itemsep\z@%
    \topsep 2.5pt%
    \partopsep\z@%
    \parskip\z@%
    \parsep\z@}%
}{\endlist\par\smallskip}%
\usepackage{subfig,tikz}
\usetikzlibrary{positioning}
\usetikzlibrary{arrows}
\usetikzlibrary{automata}
\usetikzlibrary{trees}
\makeatletter
\def\@listi{\leftmargin\leftmargini
               \topsep 3\p@ \@plus\p@ \@minus\p@
               \parsep 2\p@ \@plus\p@ \@minus\p@
               \itemsep \parsep}
\makeatother
\usepackage{natbib}\renewcommand{\cite}{\citep}
\bibliographystyle{natbibsrt}

\setlength\bibsep{2pt}
\usepackage{url,hyperref}
\makeatletter\AtBeginDocument{\hypersetup{pdftitle=\@title,pdfauthor=\@author}}\makeatother

 \title{The Parametric Ordinal-Recursive Complexity\\of
  Post Embedding Problems%
  \thanks{Research partially funded by the ANR ReacHard project (ANR
    11 BS02 001 01).  The first author is partially funded by the Tata
    Consultancy Service.  Part of this research was conducted while
    the second author was visiting the Department of Computer Science
    at Oxford University thanks to a grant from the ESF \emph{Games
      for Design and Verification} activity.}}
 \author{Prateek Karandikar$^{\text{1,2}}$\quad Sylvain Schmitz$^{\text{2}}$}
\date{$^{\text{1}}$CMI, Chennai, India\\$^{\text{2}}$LSV, ENS Cachan
  \& CNRS, France}
\begin{document}
\maketitle
\begin{abstract}
 \emph{Post Embedding Problems} are a family of decision problems
 based on the interaction of a rational relation with the subword
 embedding ordering, and are used in the literature to prove non
 multiply-recursive complexity lower bounds.  We refine the
 construction of \citeauthor{lcs} (LICS 2008) and prove parametric
 lower bounds depending on the size of the alphabet.
\end{abstract}
\section{Introduction}
\paragraph{Ordinal Recursive} functions and subrecursive
hierarchies~\citep{rose84,fairtlough98} are employed in computability
theory, proof theory, Ramsey theory, rewriting theory, etc.\ as tools
for bounding derivation sizes and other objects of very high
combinatory complexity.  A standard example is the ordinal-indexed
\emph{extended Grzegorczyk hierarchy} $\FGH{\alpha}$~\citep{lob70},
which characterizes classical classes of functions: for
instance, $\FGH{2}$ is the class of elementary functions,
$\bigcup_{k<\omega}\FGH{k}$ of primitive-recursive ones, and
$\bigcup_{k<\omega}\FGH{\omega^k}$ of multiply-recursive ones.  
Similar tools are required for the classification of decision problems
arising with verification algorithms and logics, prompting the still
young investigation of \emph{fast-growing complexity} classes
$\F_\alpha$ and their associated complete
problems~\citep{friedman99,wqo}.

\paragraph{Post Embedding Problems} (PEPs) have been introduced by
\citet{pepreg} as a tool to prove the decidability of safety and
termination problems in unreliable channel systems.  The most
classical instance of a PEP is called ``regular'' by \citet{pepreg}, but
we will follow \citet{barcelo12} and rather call it \emph{rational} in
this paper:
\begin{decpb}{Rational Embedding Problem}{EP[Rat]}
  \item[input] A rational relation $R$ in
    $\Sigma^\ast\times\Sigma^\ast$.
  \item[question] Is the relation $R\cap{\embeds}$ empty?
\end{decpb}\noindent
Here, the $\embeds$ relation denotes the \emph{subword embedding}
ordering, which relates two words $w$ and $w'$ if $w=c_1\cdots c_n$
and $w'=w_0c_1w_1\cdots w_nc_nw_{n+1}$ for some symbols $c_i$ in
$\Sigma$ and words $w_i$ in $\Sigma^\ast$; in other words, $w$ can be
obtained from $w'$ by ``losing'' some symbol occurrences (maybe none).

Although PEPs appear naturally in relation with channel
systems~\citep{pepreg,CS-concur08,jks-ifiptcs12} and queries on graph
databases~\citep{barcelo12}, their main interest lies in their use in
lower bound proofs for other, sometimes seemingly distantly related
problems \citep{mtl,ata,tsoreach}: in spite of their simple
formulation, they are known to be of non multiply-recursive complexity
in general.  In fact, this motivation has been present from their
inception in \citep{pepreg}: find a ``master'' decision problem
complete for $\F_{\omega^\omega}$, the class of
\emph{hyper-Ackermannian} problems, solvable with non
multiply-recursive complexity, but no less---much like SAT is often
taken as the canonical \textsc{NPTime}-complete problem, or the Post
Correspondence Problem for $\Sigma^0_1$.  This has also prompted a
wealth of research into variants and related questions
\begin{vshort}
\citep{CS-icalp10,barcelo12,KS-csr12}.
\end{vshort}
\begin{vlong}
\citep{CS-fossacs08,CS-blockers,CS-icalp10,barcelo12,KS-csr12}.
\end{vlong}

\medskip In this paper, we revisit and simplify the original proof of
\citet{lcs} that established the hardness of PEPs, and prove tight
\emph{parameterized} lower bounds when the size of the alphabet
$\Sigma$ is fixed.  More precisely, we show that the $(k+2)$-rational
embedding problem, i.e.\ the restriction of EP[Rat] to alphabets
$\Sigma$ of size at most $k+2$, is hard for $\F_{\omega^{k}}$ the
class of \emph{$k$-Ackermannian problems} if $k\geq 2$.  As the
problem can be shown to be in
$\F_{\omega^{k+1}+1}$~\citep{SS-icalp2011,KS-csr12}, %
we argue this to be a rather tight bound.  %
The hyper-Ackermannian lower bound of
$\F_{\omega^\omega}$ first proven by \citeauthor{lcs} then arises when
$|\Sigma|$ is not fixed but depends on the instance.

\begin{figure}[tbp]
  \centering
  \begin{tikzpicture}[->,>=stealth',shorten >=1pt,inner
    sep=1pt,semithick,auto,every node/.style={font=\footnotesize},node
    distance=.6cm]
    \node(F){\textsc{Space} $F_{\omega^k}$};
    \node[right=1.5cm of F](lrr){$(k+2)$-LR[1-bld]};
    \node[above right=.5cm and 1.5cm of lrr](ratep){$(k+2)$-EP[Rat]};
    \node[above right=-.1cm and 1.5cm of lrr](regep){$(k+3)$-EP[Sync]};
    \node[below right=-.1cm and 1.5cm of lrr](lcs){$(k+2)$-LCS};
    \node[below right=.5cm and 1.5cm of lrr](term){$(k+2)$-LT[1-bld]};
    \path[every node/.style={font=\tiny,very near end}]
      (F) edge node[midway]{Prop.~\ref{prop-lrr}} (lrr)
      (lrr.east) edge[bend left=20] node{Prop.~\ref{prop-ratep}} (ratep.west)
      (lrr.east) edge node{Prop.~\ref{prop-regep}} (regep.west)
      (F.east) edge[swap,bend right=10] node{Prop.~\ref{prop-term}} (term.west)
      (lrr.east) edge[swap] node{Prop.~\ref{prop-lcs}} (lcs.west);
  \end{tikzpicture}
  \caption{\label{fig-reds}Relationships between PEPs and similar
    decision problems.}
\end{figure}
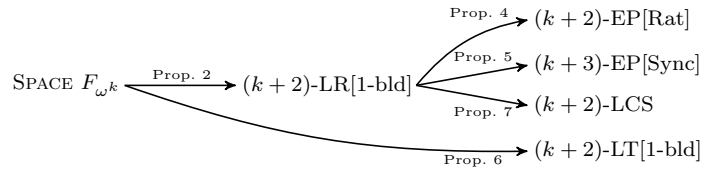
\paragraph{Overview.}
Technically, our results rely on an
implementation of the computations for the \emph{Hardy
  functions} $H^{\omega^{\omega^k}}$ and their inverses
by successive applications of a relation with a fixed \emph{bounded
  length discrepancy}.  The main difficulty here is that this
implementation should be \emph{robust} for the symbol losses
associated with the embedding relation.  It requires in particular a
robust encoding of ordinals below $\omega^{\omega^k}$ as sequences
over an alphabet of $k+2$ symbols, for which we adapt the
constructions of \citep{lcs,haddad12}; see \autoref{sec-hardy}.

This allows us to show in \autoref{sec-ack} that the following problem
is $\F_{\omega^k}$-hard when $k\geq 2$, $|\Sigma|=k+2$, and $R$ has a bounded
length discrepancy of 1:
\begin{decpb}{Lossy Rewriting}{LR[Rat]}
\item[input] A rational relation $R$ in 
  $\Sigma^\ast\times\Sigma^\ast$ and two words $w$ and $w'$
  in $\Sigma^\ast$.
  \item[question] Does $(w,w')$ belong to the reflexive transitive
    closure $R_{\embedd}^\trc$?
\end{decpb}\noindent
Here $R_{\embedd}$ denotes the ``lossy version'' of the relation $R$,
defined formally as the composition ${\embedd}\comp R\comp{\embedd}$.
We denote the restricted problem when the rational relation $R$ has a
bounded length discrepancy of 1 by LR[1-bld].

We then show in \autoref{sec-rel} that LR[1-bld] can easily be reduced
to EP[Rat] and other (parameterized) embedding problems---including
EP[Sync], a restriction of EP[Rat] introduced by \citet{barcelo12}
where the relation $R$ is \emph{synchronous} (aka \emph{regular}), and
which required a complex lower bound proof.  Figure~\ref{fig-reds}
summarizes the lower bounds presented in this paper.  In a sense, LR
is our own champion for the title of ``master'' problem for
$\F_{\omega^\omega}$.  Besides its rather simple statement, note that
the related question of whether $(w,w')$ belongs to $R^\trc$ is
undecidable by an easy reduction from the acceptance problem for Turing
machines.

Let us now turn to the necessary formal background on PEPs in
\autoref{sec-pep}.  Due to space constraints, some proof details will
be found in the appendices.

\section{Post Embedding Problems}\label{sec-pep}
\paragraph{Rational Relations}\citep{rtrans} play an important role
in the following, as they provide a notion of finitely presentable
relations over strings more powerful than string rewrite systems, and
come with a large body of theory and results~\citep[see
e.g.][Chap.~IV]{saka}.  Let us quickly skim over the notations and
definitions that will be needed in this paper.

We assume the reader to be familiar with the basic characterizations
of \emph{rational relations} $R$ between two finite alphabets
$\Sigma$ and $\Delta$ by
\begin{description}
\item[closure] of the finite relations in $\Sigma^\ks\times\Delta^\ks$
  under union, concatenation, and Kleene star,\footnote{We use
  different symbols ``$^\ks$'' and ``$^\kp$'' for Kleene star and
  Kleene plus,
  i.e.\ iteration of concatenation ``$\cdot$'' on the one hand, and
  ``$^\trc$'' and ``$^\tc$'' for reflexive transitive closure and
  transitive closure, i.e.\ iteration of composition ``$\comp$'' on
  the other hand.  Rational relations and length-preserving relations
  are closed under Kleene star, but none of the classes of relations
  we consider is closed under reflexive transitive closure.}
\item[finite transductions] defined by normalized transducers
  $\?T=\tup{Q,\Sigma,\Delta,\delta,I,F}$ where $Q$ is a finite set of
  states, $\delta\subseteq
  Q\times((\Sigma\times\{\varepsilon\})\cup(\{\varepsilon\}\times\Delta))\times
  Q$ ($\varepsilon$ denoting the empty word of length
  $|\varepsilon|=0$), initial set of states $I\subseteq Q$, and final
  set of states $F\subseteq Q$,
\item[decomposition] into a regular language $L$ over some finite
  alphabet $\Gamma$ and two morphisms $u{:}\,\Gamma^\ks\to\Sigma^\ks$
  and $v{:}\,\Gamma^\ks\to\Delta^\ks$ s.t.\
  $R=u^{-1}\comp\mathrm{Id}_L\comp v$, where $\mathrm{Id}_L$ is the
  identity function over the restricted domain $L$.
\end{description}

This last characterization is known as 
\begin{vlong}
\citeauthor{nivat}%
\end{vlong}
\begin{vshort}
Nivat%
\end{vshort}
's Theorem,
and shows that EP[Rat] can be stated alternatively as taking as input
a rational language $L$ in $\Gamma^\ast$ and two morphisms $u$ and $v$
from $\Gamma^\ast$ to $\Sigma^\ast$ and asking whether there exists
some word $x$ in $L$ s.t.\ $u(x)\embeds v(x)$~\citep{pepreg}.  This
justifies the name of ``Post Embedding Problem'', as the
related \emph{Post Correspondence Problem} asks instead given $u$ and
$v$ whether there exists $x$ in $\Gamma^+$ s.t.\ $u(x)=v(x)$.

\paragraph{Synchronous Relations} are a restricted class of rational
relations that display closure under intersection and complement, in
addition to e.g.\ the closure under composition and inverse that all
rational relations enjoy.  A rational relation has \emph{$b$-bounded
length discrepancy} if the absolute value of $|u|-|v|$ is at most $b$
for all $(u,v)$ in $R$, and has \emph{bounded length discrepancy}
(bld) if there exists such a finite $b$. In particular, it
is \emph{length-preserving} if $|u|=|v|$, i.e.\ if it has bld $0$.
A \emph{synchronous relation} is a finite union of relations of form
$\{(u,vw)\mid (u,v)\in R\wedge w\in L\}$ and $\{(uw,v)\mid (u,v)\in
R\wedge w\in L\}$ where $R$ ranges over length-preserving rational
relations and $L$ over regular languages.  In terms of classes of
relations in $\Sigma^\ks\times\Delta^\ks$, we have the strict
inclusions
\begin{equation}
 \mathrm{lp}=0\mathrm{\text-bld}\subsetneq\cdots\subsetneq
 b\mathrm{\text-bld}\subsetneq (b+1)\mathrm{\text-bld}\subsetneq\cdots\subsetneq\mathrm{bld}\subsetneq\mathrm{Sync}\subsetneq\mathrm{Rat}\;.
\end{equation}

\paragraph{Post Embedding Problems,} as we have seen in the
introduction, are concerned with the interplay of a rational relation
$R$ in $\Sigma^\ks\times\Sigma^\ks$ with the subword embedding
ordering $\embeds$.  The latter is a particular case of a
(deterministic) rational relation that is not synchronous.  Both EP[Rat]
and LR[Rat] are particular instances of more general, undecidable
problems: the emptiness of intersection of two rational relations for
EP[Rat], and the word problem in the reflexive transitive closure of a
rational relation for LR[Rat].  We can add another natural problem to
the set of PEPs:
\begin{decpb}{Lossy Termination}{LT[Rat]}
\item[input] A rational relation $R$ over $\Sigma$ and a word $w$ in
  $\Sigma^\ast$.
\item[question] Does $R_\embedd^\trc$ terminate from $w$, i.e.\ is every
sequence
$w=w_0\mathbin{R_\embedd}w_1\mathbin{R_\embedd}\cdots\mathbin{R_\embedd}w_i\mathbin{R_\embedd}\cdots$
with $w_0,w_1,\dots,w_i,\dots$ in $\Sigma^\ast$ finite?
\end{decpb}
Again, this is a variant of the termination problem, which is in
general undecidable when the relation is not lossy.

\paragraph{Restrictions.}  We parameterize PEPs with the subclass of
rational relations under consideration for $R$ and the cardinal of the
alphabet $\Sigma$; for instance, $(k+2)$-EP[Sync] is the variant of
EP[Rat] where the relation is synchronous and $|\Sigma|=k+2$.  We are
interested in this paper in providing $\F_{\omega^k}$ lower
bounds with the smallest possible class of relations and smallest
possible alphabet size, but we should also mention that some (rather
strong) restrictions become tractable:
\begin{itemize}
\item \citet{barcelo12} show that EP[Rec]---where a \emph{recognizable
    relation} is a finite union of products $L\times L'$ where $L$ and
  $L'$ range over regular languages---is in \textsc{NLogSpace},
  because the intersection $R\cap{\embeds}$ is rational, and can
  effectively be constructed and tested for emptiness on the fly,
\item \citet{pepreg} show that EP[2Morph]---where a \emph{2-morphic
    relation}~\citep{latteux83} is the composition $R=(u^{-1}\comp
  v)\setminus\{(\varepsilon,\varepsilon)\}$ of two morphisms $u$ and
  $v$ from $\Gamma^\ks$ to $\Sigma^\ks$---is in \textsc{LogSpace},
  because it reduces to checking whether there exists $a$ in $\Gamma$
  s.t.\ $u(a)\embeds v(a)$,
\item the case EP[Rewr] of \emph{rewrite relations} is
  similarly in \textsc{LogSpace}: a rewrite relation $R$ is defined from a
  \emph{semi-Thue system}, i.e.\ a finite set $\Upsilon$ of rules $(u,v)$ in
  $\Sigma^\ks\times\Sigma^\ks$, as ${\to_\Upsilon}=\{(wuw',wvw')\mid
  w,w'\in\Sigma^\ks\wedge (u, v)\in\Upsilon\}$, and EP[Rewr] reduces to checking
  whether $u\embeds v$ for some rule $(u,v)$ of $\Upsilon$,
\item the unary alphabet case of $1$-EP[Rat] is in \textsc{NLogSpace}:
  this can be seen using Parikh images and Presburger arithmetic; see
  App.~\ref{ax-1ep} for details:
\end{itemize}
\begin{proposition}\label{prop-1ep}
  The problem $1$-EP[Rat] is in \textsc{NLogSpace}.
\end{proposition}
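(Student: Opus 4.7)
The plan is to reduce $1$-EP[Rat] to a weighted reachability question on the transducer's underlying graph. Writing $\Sigma=\{a\}$, we have $u\embeds v$ iff $|u|\leq|v|$ for all $u,v\in\{a\}^\ast$. Given a normalized transducer $\?T=\tup{Q,\{a\},\{a\},\delta,I,F}$ for $R$, I would assign weight $-1$ to each input transition $(q,(a,\varepsilon),q')\in\delta$ and weight $+1$ to each output transition $(q,(\varepsilon,a),q')\in\delta$. The answer is YES iff $\?T$ admits a walk from some $q_0\in I$ to some $q_f\in F$ of total weight at least $0$.

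I would then establish the following dichotomy: such a walk exists iff either (a)~the underlying graph has a simple $I$-to-$F$ path of weight $\geq 0$, or (b)~some vertex $v\in Q$ is simultaneously reachable from $I$, co-reachable from $F$, and lies on a simple cycle of strictly positive weight. The ``if'' direction is immediate in case~(a), and in case~(b) one iterates the positive cycle enough times to dominate any fixed weight deficit on the surrounding $I$-$F$ walk. For the ``only if'' direction, start from a non-negative-weight $I$-$F$ walk and repeatedly excise closed sub-walks until reaching a simple $I$-$F$ path $P$; either $P$ already has weight $\geq 0$ (case~(a)), or some excised closed walk has strictly positive weight, and splitting it at a repeated vertex yields, by induction on length, a simple positive cycle whose vertex lay on the original walk, hence is reachable from $I$ and co-reachable from $F$ (case~(b)).

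Deciding each case in \textsc{NLogSpace} is then direct. Case~(a): guess the path one transition at a time, maintaining the current state, a length counter bounded by $|Q|$, and a weight counter bounded in absolute value by $|Q|$, all in $O(\log|Q|)$ bits. Case~(b): guess $q_0\in I$, $q_f\in F$, and $v\in Q$, perform two standard \textsc{NL} reachability checks $q_0\to^\ast v$ and $v\to^\ast q_f$, then guess a closed walk at $v$ of length at most $|Q|$ while tracking its running weight in logspace, accepting when we return to $v$ with strictly positive weight. The main (minor) obstacle is the cycle-decomposition lemma asserting that every positive closed walk in a finite directed graph contains a simple positive cycle; this mirrors the Parikh-image and Presburger-arithmetic viewpoint alluded to in the paper, where the semilinear set $\{|v|-|u|:(u,v)\in R\}\subseteq\+Z$ meets $\+Z_{\geq 0}$ iff some base offset is already non-negative or can be amplified by a strictly positive period.
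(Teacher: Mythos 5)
Your proof is correct and is essentially the paper's argument in different clothing: the dichotomy ``simple $I$--$F$ path of weight $\geq 0$, or reachable/co-reachable vertex on a simple positive cycle'' is exactly the paper's case analysis on the Parikh image of the transducer (a linear set with non-decreasing basis, or a basis amplifiable by an increasing period), and the resulting \textsc{NLogSpace} guessing procedure is the same. The only content you add is an explicit proof of the walk-decomposition lemma, which the paper leaves implicit.
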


\section{Hardy Computations}\label{sec-hardy}
We use the \emph{Hardy hierarchy} as our main subrecursive
hierarchy~\citep{lob70,rose84,fairtlough98}.  Although we will only
use the lower levels of this hierarchy, its general definition is
worth knowing, as it is archetypal of ordinal-indexed
\emph{subrecursive hierarchies}; see~\cite{wqo} for a self-contained
presentation.

\subsection{The Hardy Hierarchy}\label{sub-hardy}
\paragraph{Ordinal Terms.}
Let $\varepsilon_0$ be the smallest solution of the equation
$\omega^x=x$.  It is well-known that any ordinal
$\alpha<\varepsilon_0$ can be written uniquely in Cantor Normal Form
(CNF) as a sum
\begin{equation}\label{eq-def-cnf}
  \alpha = \omega^{\beta_1}+\cdots+\omega^{\beta_n}
\end{equation}
where $\beta_n\leq\cdots\leq\beta_1<\alpha$ and each $\beta_i$ is
itself in CNF.  This ordinal $\alpha$ is $0$ if $n=0$
in \eqref{eq-def-cnf}, a \emph{successor ordinal} if $\beta_n$ is $0$,
and a \emph{limit ordinal} otherwise.  Subrecursive hierarchies are
defined through assignments of \emph{fundamental sequences}
$(\lambda_n)_{n<\omega}$ for limit ordinals $\lambda<\varepsilon_0$,
satisfying $\lambda_n<\lambda$ for all $n$ and
$\lambda=\sup_n\lambda_n$.  A standard assignment is defined by:
\begin{align}
\label{eq-def-fundseq}
\bigl(\gamma + \omega^{\alpha+1}\bigr)_n & \eqdef \gamma +
\omega^\alpha\cdot n, & \bigl(\gamma + \omega^{\lambda}\bigr)_n
& \eqdef \gamma +\omega^{\lambda_n},
\end{align}
thus verifying $\omega_n = n$.  Let
$\Omega\eqdef\omega^{\omega^\omega}$; this yields for instance
$\Omega_k=\omega^{\omega^k}$ and, if $k>0$,
$(\Omega_k)_{n}=\omega^{\omega^{k-1}\cdot n}$.

\paragraph{Hardy Hierarchy.}
The \emph{Hardy hierarchy} $(H^\alpha)_{\alpha<\varepsilon_0}$ is an
ordinal-indexed hierarchy of functions $H^\alpha{:}\,\+N\to\+N$
defined by
\begin{align}\label{eq-def-hardy}
  H^0(n)&\eqdef n &
  H^{\alpha+1}(n)&\eqdef H^\alpha(n+1) &
  H^{\lambda}(n)&\eqdef H^{\lambda_n}(n)\;.
\end{align}
Observe that $H^1$ is simply the successor function, and more
generally $H^\alpha$ is the $\alpha$th iterate of the successor
function, using diagonalisation to treat limit ordinals.  A related
hierarchy is the \emph{fast growing hierarchy}
$(F_{\alpha})_{\alpha<\varepsilon_0}$, which can be defined by
$F_\alpha\eqdef H^{\omega^\alpha}$, resulting in $F_0(n)=H^1(n)=n+1$,
$F_1(n)=H^\omega(n)=H^{n}(n)=2n$, $F_2(n)=H^{\omega^2}(n)=2^nn$ being
exponential, $F_3=H^{\omega^3}$ being non-elementary,
$F_\omega=H^{\omega^\omega}=H^{\Omega_1}$ being an Ackermannian
function, $F_{\omega^k}=H^{\Omega^k}$ a $k$-Ackermannian function, and
$F_{\omega^\omega}=H^{\Omega}$ an hyper-Ackermannian function.

\paragraph{Fast-Growing Complexity Classes.}
Our intention is to establish the ``$F_{\omega^k}$ hardness'' of Post
embedding problems.  In order to make this statement more precise, we
define the class $\F_{\omega^k}$ of \emph{$k$-Ackermannian problems}
as a specific instance of the \emph{fast-growing complexity classes}
defined for $\alpha\geq 3$ by
\begin{vshort}
\pagebreak
\end{vshort}
\begin{align}
  \F_\alpha&\eqdef\bigcup_{p\in\bigcup_{\beta<\alpha}\FGH{\beta}}\text{\textsc{DTime}}(F_\alpha(p(n)))\;,&
  \FGH{\alpha}&=\bigcup_{c<\omega}\text{\textsc{FDTime}}(F^c_\alpha(n))\;,
\end{align}
where $\FGH{\alpha}$ defined above is the $\alpha$th level of the \emph{extended Grzegorczyk hierarchy}
\citep{lob70} when $\alpha\geq 2$.  The classes $\F_\alpha$ are naturally equipped with
$\bigcup_{\beta<\alpha}\FGH{\beta}$ as class of reductions.  For
instance, because $\bigcup_{k<\omega}\FGH{\omega^k}$ is exactly the
set of multiply-recursive functions, $\F_{\omega^\omega}$ captures the
intuitive notion of hyper-Ackermannian problems closed under
multiply-recursive reductions.\footnote{Note that, at such high
  complexities, the usual distinctions between deterministic vs.\
  nondeterministic, or time-bounded vs.\ space-bounded computations
  become irrelevant.
  In particular, $\FGH{2}$ is the set of elementary
  functions, and $\F_3$ the class of problems with a tower of
  exponentials of height bounded by some elementary function of the
  input as an upper bound.}

\paragraph{Hardy Computations.}
The fast-growing and Hardy hierarchies have been used in several
publications to establish Ackermannian and higher complexity
bounds \citep{lcs,SS-icalp2011,haddad12,wqo}.  The principle in
their use for lower bounds is to view \eqref{eq-def-hardy}, read
left-to-right, as a rewrite system over $\varepsilon_0\times\+N$, and
later implement it in the targeted formalism.
Formally, a (forward) \emph{Hardy computation} is a sequence
\begin{equation}\label{eq-hcomp}
  \alpha_0,n_0\step{}\alpha_1,n_1\step{}
  \alpha_2,n_2\step{}\cdots\step{}\alpha_\ell,n_\ell
\end{equation}
of evaluation steps implementing the equations in \eqref{eq-def-hardy}
seen as left-to-right rewrite rules over \emph{Hardy configurations}
$\alpha,n$.  It guarantees $\alpha_0>\alpha_1>\alpha_2>\cdots$ and
keeps $H^{\alpha_i}(n_i)$ invariant.  We say it is \emph{complete}
when $\alpha_\ell=0$ and then $n_\ell%
=H^{\alpha_0}(n_0)$ (we also consider incomplete computations).
A \emph{backward Hardy computation} is obtained by
using \eqref{eq-def-hardy} as right-to-left rules.  For instance,
\begin{equation}\label{eq-hardy-Omega}
  \omega^{\omega^k},n\to\omega^{\omega^{k-1}\cdot
    n},n\to\omega^{\omega^{k-1}\cdot (n-1)+\omega^{k-2}\cdot n},n
\end{equation}
constitute the first three steps of the forward Hardy computation
starting from $\Omega_k,n$ if $k>1$ and $n>0$.

\paragraph{Termination of Hardy Computations.}  Because
$\alpha_0>\alpha_1>\cdots>\alpha_\ell$ in a forward Hardy computation
like \eqref{eq-hcomp}, it necessarily terminates.  For inverse
computations, this is less immediate, and we introduce for this a
\emph{norm} $\|\alpha\|$ of an ordinal $\alpha$ in $\varepsilon_0$ as
its count of ``$\omega$'' symbols when written as an ordinal term:
formally, $\|.\|{:}\,\varepsilon_0\to\+N$ is defined by
\begin{align}
  \|0\|&\eqdef 0 &\|\omega^{\alpha}\|&\eqdef
  1+\|\alpha\|&\|\alpha+\alpha'\|&\eqdef\|\alpha\|+\|\alpha'\|\;.
\end{align}
\begin{vlong}
Observe that $\|\alpha\cdot m\|=\|\alpha\|\cdot m$.
\end{vlong}
We can check 
\begin{vlong}
by transfinite induction on $\alpha>0$ 
\end{vlong}
that, for any limit ordinal $\lambda$, %
$\|\lambda_n\|>\|\lambda\|$ whenever $n>1$.
\begin{vlong}
Indeed, if $\alpha=\beta+1$,
then $\|\lambda_n\|=\|\gamma+\omega^\beta\cdot
n\|=\|\gamma\|+(1+\|\beta\|)n>\|\gamma\|+2+\|\beta\|=\|\lambda\|$, and
in the limit case,
$\|\lambda_n\|=\|gamma\|+1+\|\alpha_n\|>\|\gamma\|+1+\|\alpha\|=\|\lambda\|$
by ind.\ hyp.
\end{vlong}
Therefore, 
\begin{vlong}
if $n$ is larger than $1$ in a configuration $\alpha,n$ of
an inverse Hardy computation following \eqref{eq-def-hardy} from right
to left, either we apply the successor rule and reach $\alpha+1,n-1$
with a decreased $n$, or we apply the limit rule and reach $\alpha',n$
s.t.\ $\alpha=\alpha'_n$ with a decreased $\|\alpha\|$:
\end{vlong}
in a backward
Hardy computation, the pair $(n,\|\alpha\|)$ decreases for the
lexicographic ordering over $\+N^2$.  As this is a well-founded
ordering, we see that backward computations terminate if $n$ remains
larger than $1$---which is a reasonable hypothesis for the following.

\subsection{Encoding Hardy Configurations}\label{sub-hardy-conf}
Our purpose is now to encode Hardy computations as relations over
$\Sigma^\ks$.  This entails in particular (1)~encoding configurations
$\alpha,n$ in $\Omega_k\times\+N$ of a Hardy computation as finite
sequences using \emph{cumulative ordinal descriptions} or
``\emph{codes}'', which we do in this subsection, and (2)~later
in \autoref{sub-hardy-comp} designing a 1-bld relation that implements
Hardy computation steps over codes.  A constraint on codes is that
they should be \emph{robust} against losses, i.e.\ if $\pi(x)$ and
$\pi(x')$ are the ordinals associated to the codes $x$ and $x'$, then
$H^{\pi(x)}(n)\leq H^{\pi(x')}(n)$---pending some hygienic conditions
on $x$ and $x'$, see Lem.~\ref{lem-robust}.

\paragraph{Finite Ordinals} below $k$ can be represented as single
symbols $a_0,\dots,a_{k-1}$ of an alphabet $\Sigma_k$ along with a
bijection
\begin{equation}\label{eq-def-finite}
  \varphi(a_i)\eqdef i\;.
\end{equation}

\paragraph{Small Ordinals} below $\omega^{k}$ are then easily encoded as
finite words over $\Sigma_k$: given a word $w=b_1\cdots b_n$ over
$\Sigma_k$, we define its associated ordinal in $\omega^{k}$ as
\begin{equation}\label{eq-def-beta}
  \beta(w)\eqdef \omega^{\varphi(b_1)}+\cdots+\omega^{\varphi(b_n)}\;.
\end{equation}
Note that $\beta$ is surjective but not injective: for instance,
$\beta(a_0a_1)=\beta(a_1)=\omega$.  By restricting ourselves to
\emph{pure} words over $\Sigma_k$, i.e.\ words satisfying
$\varphi(b_j)\geq\varphi(b_{j+1})$ for all $1\leq j<n$, we obtain a
bijection between $\omega^k$ and $\pure(\Sigma_k^\ast)$ the set of
pure finite words in $\Sigma_k^\ks$, because then \eqref{eq-def-beta}
is the CNF of $\beta(w)$.

\paragraph{Large Ordinals} below $\Omega_k$ are denoted
by \emph{codes}~\citep{lcs,haddad12}, which are $\tally$-separated
words over the extended alphabet
$\Sigma_{k\tally}\eqdef\Sigma_k\uplus\{\tally\}$.  A code $x$ can be
seen as a concatenation $w_1\tally w_2\tally\cdots\tally w_p\tally
w_{p+1}$ where each $w_i$ is a word over $\Sigma_k$.  Its
associated ordinal $\pi(x)$ in $\Omega_k$ is then defined as
\begin{gather}
  \pi(x)\eqdef\omega^{\beta(w_1w_2\cdots w_p)}
              +\cdots+\omega^{\beta(w_1w_2)}+\omega^{\beta({w_1})}\;,
  \shortintertext{or inductively by}
  \pi(w)\eqdef 0,\qquad\pi(w\tally
              x)\eqdef\omega^{\beta(w)}\cdot\pi(x)+\omega^{\beta(w)}
\end{gather}
for $w$ a word in $\Sigma^\ks_k$ and $x$ a code.
For instance, $\pi(a_1a_0\tally)=\omega^{\omega+1}=\pi(a_0a_1a_0\tally
a_3)$, or, closer to our concerns, the initial ordinal in our
computations is $\pi(a_{k-1}^n\tally)=(\Omega_k)_n$ when $k>0$.

Observe that $\pi$ is surjective, but not injective.  We can mend this
by defining a \emph{pure} code $x=w_1\tally\cdots\tally w_p\tally
w_{p+1}$ as one where $w_{p+1}=\varepsilon$ and every word $w_i$ for
$1\leq i\leq p$ is pure---note that it does not force the
concatenation of two successive words $w_iw_{i+1}$ of $x$ to be pure.
This is intended, as this is the very mechanism that allows $\pi$ to
be a bijection between $\Omega_k$ and $\pure(\Sigma_{k\tally
}^*)$ (see App.~\ref{app-pure}):
\begin{lemma}\label{lem-pure}%
  The function $\pi$ is a bijection from $\pure(\Sigma_{k\tally}^\ks)$
  to $\Omega_k$.
\end{lemma}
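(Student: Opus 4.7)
The plan is to leverage the fact, already implicit in the text, that $\beta$ restricts to a bijection between $\pure(\Sigma_k^\ks)$ and $\omega^k$: when $w = b_1\cdots b_n$ is pure, the expression $\beta(w)=\omega^{\varphi(b_1)}+\cdots+\omega^{\varphi(b_n)}$ is precisely the Cantor Normal Form of $\beta(w)$, and every ordinal below $\omega^k$ admits a unique such CNF with exponents in $\{0,\dots,k-1\}$.

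First I would unfold the inductive definition of $\pi$ on a pure code $x = w_1\tally w_2\tally\cdots\tally w_p\tally$ to obtain
\begin{equation*}
 \pi(x) \;=\; \omega^{\beta(w_1\cdots w_p)} + \omega^{\beta(w_1\cdots w_{p-1})} + \cdots + \omega^{\beta(w_1)}\;.
\end{equation*}
Since $\beta(uv) = \beta(u)+\beta(v)$ for word concatenation, the sequence of exponents is non-strictly decreasing, and each exponent is bounded by $\beta(w_1\cdots w_p)<\omega^k$ because $\omega^k$ is closed under ordinal addition. The displayed expression is therefore a CNF (with possibly repeated summands) of some ordinal strictly below $\omega^{\omega^k}=\Omega_k$, showing that $\pi$ maps $\pure(\Sigma_{k\tally}^\ks)$ into $\Omega_k$.

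For surjectivity I would induct on the length of the CNF $\alpha = \omega^{\gamma_1}+\cdots+\omega^{\gamma_p}$ of a given $\alpha<\Omega_k$ (so $\gamma_1\ge\cdots\ge\gamma_p$ and each $\gamma_i<\omega^k$). The word $w_1$ is forced to be the unique pure representative of $\gamma_p$. For $j=2,\dots,p$, I would invoke ordinal right-subtraction (for $\mu\le\nu$ there is a unique $\delta$ with $\mu+\delta=\nu$) to define $\delta_j$ by $\gamma_{p-j+2}+\delta_j=\gamma_{p-j+1}$, note $\delta_j\le\gamma_{p-j+1}<\omega^k$, and take $w_j$ as the unique pure word with $\beta(w_j)=\delta_j$. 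The resulting pure code $x=w_1\tally\cdots\tally w_p\tally$ then satisfies $\beta(w_1\cdots w_j)=\gamma_{p-j+1}$ for every $j$, and hence $\pi(x)=\alpha$.

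For injectivity, suppose $\pi(x)=\pi(x')$ for two pure codes $x,x'$. Uniqueness of CNF forces the two displayed expressions to match summand by summand, so $p=p'$ and $\beta(w_1\cdots w_j)=\beta(w'_1\cdots w'_j)$ for all $j$. A short induction on $j$, combined with left-cancellation of ordinal addition ($\mu+\delta=\mu+\delta'$ implies $\delta=\delta'$), yields $\beta(w_j)=\beta(w'_j)$; since both are pure, the bijectivity of $\beta$ on pure words gives $w_j=w'_j$, and $w_{p+1}=w'_{p+1}=\varepsilon$ holds by definition of purity. The subtlety to watch is the case where some middle $w_i$ is empty, making two consecutive exponents coincide: this is legitimate, since equal exponents are permitted with multiplicity in CNF, and it breaks neither the surjectivity construction (the corresponding $\delta_j=0$ still gives the pure word $w_j=\varepsilon$) nor the injectivity induction.
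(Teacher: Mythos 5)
Your proof is correct and takes essentially the same route as the paper's: the paper simply defines the inverse $\pi^{-1}$ by recursion on the CNF using ordinal left subtraction (exactly the operation you invoke, despite your calling it ``right-subtraction''), which is the same construction as your iterative choice of the $\delta_j$'s. The only difference is that you additionally spell out the well-definedness, surjectivity, and injectivity checks (via additive indecomposability of $\omega^k$, uniqueness of CNF, and left cancellation) that the paper leaves implicit.
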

We also define $\pure(x)$ to be the unique pure code $x'$
verifying $\pi(x)=\pi(x')$; then $\pure(x)\embeds x$, and $x\embeds x'$
implies $\pure(x)\embeds\pure(x')$.

\paragraph{Hardy Configurations} $\alpha,n$ are finally encoded as
sequences $c=\pi^{-1}(\alpha)\sep\tally^n$ using a separator
``$\sep$'', i.e.\ as sequences in the language
$\mathrm{Confs}\eqdef\pure(\Sigma_{k\tally}^\ks)\cdot\mbox{$\{\sep\}$}\cdot\{\tally\}^\ks$.
This is a regular language over $\Sigma_{k\tally}\uplus\{\sep\}$, but the most
important fact about this encoding is that it is \emph{robust} against
symbol losses as far as the corresponding computed Hardy values are
concerned.  Robustness is a critical part of hardness proofs based on
Hardy functions.  The main difficulty rises from the
fact that the Hardy functions are not monotone in their ordinal
parameter: for instance, $H^{\omega}(n)=H^{n}(n)=2n$ is less than
$H^{n+1}(n)=2n+1$.  Code robustness is addressed in
\citep[Prop.~4.3]{lcs}%
\begin{vlong}
, and in \citep[Prop.~16]{haddad12} for a more complex encoding of
ordinals below $\omega^{\omega^{\omega^k}}$ as vector sequences%
\end{vlong}
.  Robustness is the limiting factor that prevents us from reducing
languages in $\F_{\alpha}$ for $\alpha>\Omega$ into PEPs.
\begin{lemma}[Robustness]\label{lem-robust}
Let $c=x\sep\tally^n$ and $c'=x'\sep\tally^{n'}$ be two sequences in
$\mathrm{Confs}$.  If $c \embeds c'$, then $H^{\pi(x)}(n)\leq
H^{\pi(x')}(n')$.
\end{lemma}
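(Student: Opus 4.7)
First, observe that the separator $\sep$ occurs exactly once in any sequence of $\mathrm{Confs}$, so any embedding $c \embeds c'$ must align the two occurrences of $\sep$. This decomposes the hypothesis into $x \embeds x'$ (both pure codes in $\Sigma_{k\tally}^\ks$) and $\tally^n \embeds \tally^{n'}$, the latter yielding $n \leq n'$. The target inequality would be immediate if $H^\alpha$ were monotone in $\alpha$, but, as highlighted just before the statement, it is not: the whole point of the pure-code encoding is to recover a form of monotonicity ``up to losses''.

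I would then prove $H^{\pi(x)}(n) \leq H^{\pi(x')}(n')$ by well-founded induction on the lexicographic pair $(\|\pi(x')\|,n')$, which is precisely the measure shown to decrease in backward Hardy steps when $n>1$ (the degenerate cases $n \leq 1$ and $x' = \varepsilon$ are handled directly). Each induction step performs one forward evaluation of $H^{\pi(x')}(n')$ via \eqref{eq-def-hardy} and matches it with an operation on the left. The case analysis is driven by the leading factor of the pure code $x'$, which dictates whether $\pi(x')$ is zero ($x' = \varepsilon$), a successor ($x'$ begins with $\tally$, forcing $w_1 = \varepsilon$), or a limit ($x'$ begins with some $a_i$, giving $\beta(w_1) > 0$). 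In each case, $x \embeds x'$ forces $\pi(x)$ into a shape compatible with the same Hardy rule, and either (a)~the rule applies simultaneously on both sides and the induction hypothesis closes the goal on a strictly smaller pair, or (b)~symbols were actually lost at the leading position of $x$, in which case $x \embeds y$ where $y$ is the strictly simpler pure code obtained by performing the leading Hardy step on $x'$, so that $\|\pi(y)\| < \|\pi(x')\|$ and the induction hypothesis applies again.

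The main obstacle is exactly the non-monotonicity of $H$ in its ordinal argument: comparing $\pi(x)$ and $\pi(x')$ as ordinals is not enough, one has to simulate the computations. The pure-code encoding of \autoref{sub-hardy-conf} is engineered so that any single-symbol loss on a pure code corresponds to one of a small, well-understood list of transformations of the CNF (dropping a leading $a_0\tally$, merging two blocks when a $\tally$ is lost, decreasing a digit in some $w_i$), and each of these is compatible with the next Hardy step at the same value of $n$. Beyond setting up this case analysis, the proof is careful bookkeeping through these elementary losses and closely mirrors the argument of~\citep[Prop.~4.3]{lcs}.
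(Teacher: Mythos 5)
Your reduction of the hypothesis to $x\embeds x'$ and $n\leq n'$ via the unique occurrence of $\sep$ is correct, and your overall strategy---simulate the Hardy evaluation of $H^{\pi(x')}(n')$ step by step, with a case analysis on the leading factor of the pure code and on whether the corresponding symbols of $x$ were lost---is indeed the approach of \citep[Prop.~4.3]{lcs}, which is all the paper itself offers for this lemma (it is cited, not reproved). However, your induction is not well-founded as stated. The pair you propose, $(\|\pi(x')\|,n')$ ordered lexicographically with the norm first, is not the measure the paper uses for backward computations (that is $(n,\|\alpha\|)$, with $n$ first, precisely because a backward successor step increases the norm while decreasing $n$); and, more importantly, a measure that decreases along \emph{backward} steps increases along the \emph{forward} steps you intend to perform. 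Concretely, a forward limit step replaces $\lambda$ by $\lambda_{n'}$, and $\|\lambda_{n'}\|\geq\|\lambda\|$ (strictly, in general, for $n'>1$) while $n'$ is unchanged, so your first component grows and the induction does not terminate. The same problem invalidates your case~(b): the pure code $y$ obtained by performing a leading limit step on $x'$ satisfies $\|\pi(y)\|\geq\|\pi(x')\|$, not $<$.

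The fix is standard: induct transfinitely on the ordinal $\pi(x')$ itself (forward steps strictly decrease the ordinal, by the remark on termination of forward computations in \autoref{sub-hardy}), or equivalently on the length of the complete forward Hardy computation from $\pi(x'),n'$, using monotonicity of $H^\alpha$ in its integer argument to absorb the change from $n'$ to $n'+1$ at successor steps. With that measure your case analysis goes through, but you should also be explicit about the one genuinely delicate point you currently bury in ``careful bookkeeping'': when a limit step is performed on $x'$ and the corresponding leading symbol of $x$ was lost, it is not literally true that $x$ embeds into the successor code of $x'$ as a word; one needs the specific structural properties of pure codes (this is where purity of each block, and the $\pure(\cdot)$ normalization after a step, are actually used) to compare the resulting Hardy values. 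Spelling out that single case is the substance of \citep[Prop.~4.3]{lcs}.
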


\subsection{Encoding Hardy Computations}\label{sub-hardy-comp}
It remains to present a 1-bld relation that implements Hardy
computations over Hardy configurations encoded as sequences in
$\mathrm{Confs}$. 
We translate the equations from \eqref{eq-def-hardy} into a relation
$R_H=(R_0\cup R_1\cup R_2)\cap(\mathrm{Confs}\times\mathrm{Confs})$, which can be reversed for
backward computations:
\begin{align}
\!\!\!\!R_0&\!\eqdef\!\{(\tally x\sep \tally^{n}, x\sep \tally^{n+1})\mid n\geq
0,x\in\Sigma_{k\tally}^\ks\}\label{eq-rz0}\\
\!\!\!\!R_1&\!\eqdef\!\{(wa_0\tally x\sep \tally^{n},w\tally^n \purify(a_0
x)\sep\tally^n)\mid n>1, w\in\Sigma_k^\ks, x\in\Sigma_{k\tally}^\ks\}\label{eq-rz1}\\
\!\!\!\!R_2&\!\eqdef\!\{(wa_i\tally x\sep \tally^{n},wa_{i-1}^{n}\tally\purify(a_i x)\sep\tally^n)\mid n>1, i > 0,
w\in\Sigma_k^\ks, x\in\Sigma_{k\tally}^\ks\}\!\!\label{eq-rz2}
\end{align}
The relation~$R_0$ implements the successor case, while
$R_1$ and $R_2$ implement the limit case of
\eqref{eq-def-fundseq} for ordinals of form $\gamma+\omega^{\alpha+1}$
and $\gamma+\omega^{\lambda}$ respectively.  The restriction to $n>1$
in~$R_1$ and~$R_2$ enforces termination for backward
computations; it is not required for correctness.  Because $R_H$ is a
direct translation of \eqref{eq-def-hardy} over $\mathrm{Confs}$:
\begin{lemma}[Correctness]\label{lem-crct}
  For all $\alpha,\alpha'$ in $\Omega_k$ and $n,n'>1$,
  $\mbox{$(\pi^{-1}(\alpha)\sep\tally^n)$}(R_H\cup R_H^{-1})^\trc(\pi^{-1}(\alpha')\sep\tally^{n'})$
  iff $H^\alpha(n)=H^{\alpha'}(n')$.
\end{lemma}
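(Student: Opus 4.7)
The plan is to show that $R_H$ exactly implements the forward Hardy rewrite rules of \eqref{eq-def-hardy} on encoded configurations, and then to conclude using determinism and termination of forward Hardy rewriting.

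\emph{One-step correspondence.}  Two identities drive the calculations: $\pi(\tally y)=\pi(y)+1$ (from $\pi(\varepsilon\tally y)=\omega^0\cdot\pi(y)+\omega^0$) and $\pi(wy)=\omega^{\beta(w)}\cdot\pi(y)$ for $w\in\Sigma_k^\ks$ and a code $y$ containing at least one $\tally$ (by absorbing $w$ into the first segment of $y$).  I check each of the three rules.  For $R_0$, the first identity immediately shows the step realises $\alpha+1,n\to\alpha,n+1$.  For $R_1$, purity of $wa_0$ gives $\beta(wa_0)=\beta(w)+1$, so the source encodes $\gamma+\omega^{\beta(w)+1}$ with $\gamma=\omega^{\beta(w)+1}\cdot\pi(x)$; unfolding $\pi$ on the target $w\tally^n\pure(a_0 x)$ using $\pi(a_0 x)=\omega\cdot\pi(x)$ yields $\gamma+\omega^{\beta(w)}\cdot n$, matching the first case of \eqref{eq-def-fundseq}.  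For $R_2$ with $i>0$, $\beta(wa_i)=\beta(w)+\omega^i$ is a limit, the source encodes $\gamma+\omega^{\beta(w)+\omega^i}$, and an analogous computation on $wa_{i-1}^n\tally\pure(a_i x)$, using $\pi(a_i x)=\omega^{\omega^i}\cdot\pi(x)$ and the absorption $\omega^{i-1}\cdot n+\omega^i=\omega^i$, yields $\gamma+\omega^{\beta(w)+\omega^{i-1}\cdot n}$, matching the second case.  Purification preserves $\pi$-values, keeping every target in $\mathrm{Confs}$.

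\emph{Determinism and termination.}  Conversely, for any encoded configuration with $\alpha>0$ and $n>1$, inspection of the first segment of $\pi^{-1}(\alpha)$ before its leading $\tally$ forces exactly one rule to apply: an empty segment invokes $R_0$, a segment ending in $a_0$ invokes $R_1$, and a segment ending in $a_i$ with $i>0$ invokes $R_2$.  Hence $R_H$ is deterministic on $\mathrm{Confs}$ and always progresses while $\alpha>0$, and since the ordinal strictly decreases along forward steps, every configuration normalises to the unique irreducible form $\sep\tally^{H^\alpha(n)}$ in finitely many steps.

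\emph{Conclusion.}  Two encoded configurations are therefore $(R_H\cup R_H^{-1})^\trc$-equivalent iff they share this unique normal form, iff their Hardy values coincide.  The main obstacle is the arithmetic of the one-step correspondence for $R_2$: reconciling the target's $\pi$-value with the prescribed fundamental sequence via the absorption $\omega^{i-1}\cdot n+\omega^i=\omega^i$ together with $\pi(a_i x)=\omega^{\omega^i}\cdot\pi(x)$, all while leaning on purification to keep codes in $\mathrm{Confs}$ without disturbing the denoted ordinals.
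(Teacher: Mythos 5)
Your proof is correct and takes essentially the same route as the paper, which dispatches this lemma with the single remark that $R_H$ is ``a direct translation of \eqref{eq-def-hardy} over $\mathrm{Confs}$'': your one-step correspondence (including the ordinal arithmetic for $R_1$ and $R_2$ via $\pi(a_ix)=\omega^{\omega^i}\cdot\pi(x)$ and the absorption identity) together with the determinism/termination argument is precisely the verification that remark leaves implicit. No gaps.
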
%

Unfortunately, although $R_0$ is a length-preserving rational
relation, $R_1$ and $R_2$ are not 1-bld, nor even rational.  However,
they can easily be broken into smaller steps, which are rational---as
we are applying a reflexive transitive closure, this is at no expense
in generality.  This requires more complex encodings of Hardy
configurations, with some ``finite state
control'' and a working space in order to keep track of where we are in
our small steps.  Because we do not want to spend new symbols in this
encoding, given some finite set $Q$ of states, we work on sequences in
\begin{equation}
  \Seqs\eqdef\{a_0,a_1\}^{\lceil\log
  |Q|\rceil}\cdot\{\sep\}\cdot\pure(\Sigma_k^\ks)\cdot\{\tally\}^\ks\cdot\{\sep\}\cdot\pure(\Sigma^\ks_{k\tally})\cdot\{\sep\}\cdot\{\tally,a_0,a_1\}^\ks\;.
\end{equation}
with four segments separated by ``$\sep$'': a state, a working segment, an
ordinal encoding, and a counter.  Given a state $q$ in
$Q$, we use implicitly its binary encoding as a sequence of fixed
length over $\{a_0,a_1\}$.
\begin{vlong}
  Our sequences in ``normal'' mode look like
``$q\sep\sep\pi^{-1}(\alpha)\sep\tally^n$'' with an empty working
segment and only $\tally$'s as counter symbols.
\end{vlong}

We define two relations $\Fw$ and $\Bw$ with domain and range $\Seqs$
that implement forward and backward computations with $R_H$.  A typical
case is that of computations with $R_1$, which can be implemented as
the closure of the union:
\begin{align}
  q_\Fw\sep \sep wa_0\tally x\sep \tally^{n+2}
  &\;\mathbin{\Fw_1}\;q_{\Fw_1}\sep
  w\tally\sep\purify(a_0x)\sep\tally^{n+1}a_0\label{eq-rFw10}\\
  q_{\Fw_1}\sep w\tally^m\sep x\sep\tally^{n+1}a_0^{p+1}
  &\;\mathbin{\Fw_1}\;q_{\Fw_1}\sep w\tally^{m+1}\sep x\sep\tally^{n}a_0^{p+2}
  \label{eq-rFw11}\\
  q_{\Fw_1}\sep w\tally^{m+1}\sep x\sep a_0^{n+2}
  &\;\mathbin{\Fw_1}\;q_{\Fw_1}\sep\sep w\tally^{m+1} x\sep\tally^{n+2}
  \label{eq-rFw13}
\end{align}
for $m,n,p$ in $\+N$, $w$ in $\pure(\Sigma_{k}^\ks)$, and $x$ in
$\pure(\Sigma_{k\#}^\ks)$.  Note that $\purify(a_0x)$ returns
$a_0x$ if $x$ begins with $\tally$ or $a_0$, and $x$ otherwise.  The
corresponding backward computation for $R_1$ inverses the relations
in~(\ref{eq-rFw10}--\ref{eq-rFw13}) and substitutes $q_\Bw$ and
$q_{\Bw_1}$ for $q_\Fw$ and $q_{\Fw_1}$.  The reader should be able to
convince herself that this is indeed feasible in a rational 1-bld
fashion; for instance, \eqref{eq-rFw11} can be written as a rational
expression
\begin{gather}
  \begin{bmatrix}q_{\Fw_1}{\sep}\\ q_{\Fw_1}{\sep}\end{bmatrix}\cdot
  \mathrm{Id}_{\Sigma_k^\ks}\cdot
  \begin{bmatrix}\tally\\\tally\end{bmatrix}^{\!\ks}\cdot
  \begin{bmatrix}\varepsilon\\\tally\end{bmatrix}\cdot
  \begin{bmatrix}\sep\\\sep\end{bmatrix}\cdot
  \mathrm{Id}_{\Sigma_{k\tally}^\ks}\cdot
  \begin{bmatrix}\sep\\\sep\end{bmatrix}\cdot
  \begin{bmatrix}\tally\\\tally\end{bmatrix}^{\!\ks}\cdot
  \begin{bmatrix}\tally\\\varepsilon\end{bmatrix}\cdot
  \begin{bmatrix}a_0\\a_0\end{bmatrix}^{\!\kp}\cdot
  \begin{bmatrix}\varepsilon\\a_0\end{bmatrix}\,.
\end{gather}
\begin{vlong}
A full description of $\Fw$ and $\Bw$ can be found in
App.~\ref{ax-rules}.  
\end{vlong}

Observe that separators ``$\sep$'' are reliable, and that losses
cannot pass unnoticed in the constant-sized state segment of a
sequence in $\Seqs$; thus we can use lemmas~\ref{lem-robust}
and~\ref{lem-crct} to prove that $\Fw_\embedd^\trc$ and
$\Bw_\embedd^\trc$ are ``weak'' implementations of $H^\alpha$ and its
inverse when $\alpha$ is in $\Omega_k$.  Not any reformulation of
$R_H$ as the closure of a rational relation would work here: our
relation also needs to be robust to losses; see App.~\ref{ax-rules}
for details.
\begin{lemma}[Weak Implementation]\label{lem-corr}
  The relations $\Fw$ and $\Bw$ are 1-bld and terminating.
  Furthermore, if $k\geq 1$, $m,n>1$ and $\alpha\in\Omega_k$,
  \begin{align*}
  (q_\Fw\sep\sep\pi^{-1}(\alpha)\sep\tally^n)\:&\Fw_\embedd^\trc\:(q_\Fw\sep\sep\sep\tally^m)&&\!\!\text{implies $m\leq H^\alpha(n)$}\\
  (q_\Bw\sep\sep\sep\tally^m)\:&\Bw_\embedd^\trc\:(q_\Bw\sep\sep\pi^{-1}(\alpha)\sep\tally^n)&&\!\!\text{implies $m\geq H^{\alpha}(n)$}
\end{align*}
and there exists rewrites verifying $m=H^\alpha(n)$ in both of the above cases.
\end{lemma}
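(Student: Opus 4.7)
My plan is to verify the three claimed properties---1-bld, termination, and the two bounds---by reducing everything to Correctness (Lemma~\ref{lem-crct}) and Robustness (Lemma~\ref{lem-robust}) applied at the macro level of $R_H$. The 1-bld property follows by inspection of the rational expressions defining the micro-steps (e.g.~the one displayed for~\eqref{eq-rFw11}): each block there has length discrepancy in $\{-1,0,+1\}$. For termination of $\Fw$, I would observe that each passage from one ``normal'' sequence $q_\Fw\sep\sep x\sep\tally^n$ to the next simulates a single step of $R_H$ and hence strictly decreases $\pi(x)\in\Omega_k$; within a single such passage, the lexicographic measure (ordinal symbols still to be rewritten, length of the working segment) decreases. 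For $\Bw$, the lexicographic measure $(n,\|\pi(x)\|)$ on normal configurations decreases exactly as argued in \autoref{sub-hardy} for backward Hardy computations, and the guards $n>1$ in~$R_1$ and~$R_2$ prevent the problematic $n\leq 1$ cases.

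For the exactness witnesses, a lossless run of $\Fw$ between two consecutive normal configurations reproduces exactly one application of $R_0$, $R_1$, or $R_2$; hence a maximal lossless forward run starting from $q_\Fw\sep\sep\pi^{-1}(\alpha)\sep\tally^n$ implements a complete forward Hardy computation and, by Correctness, terminates at $q_\Fw\sep\sep\sep\tally^{H^\alpha(n)}$. The symmetric argument produces a lossless backward witness with $m=H^\alpha(n)$.

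The bulk of the work is the inequality under losses. For the forward direction I would prove by induction on the number of normal configurations visited that $q_\Fw\sep\sep x\sep\tally^p\;\Fw_\embedd^\trc\;q_\Fw\sep\sep x'\sep\tally^{p'}$ implies $H^{\pi(x')}(p')\leq H^{\pi(x)}(p)$. The inductive step---showing that a single run of $\Fw_\embedd$-steps between consecutive normal configurations cannot exceed the corresponding lossless $R_H$-step---is the main obstacle. Losses in the fixed-width state prefix or on any separator $\sep$ derail the run, because the next micro-step's rational expression enforces the expected tokens, so the only losses along a valid run occur inside the ordinal or counter portions. Such losses yield an intermediate configuration lying below the intended one in the embedding order, and a mild extension of Lemma~\ref{lem-robust} covering sequences in $\Seqs$ (treating the working segment as part of the ordinal encoding) bounds its Hardy value by that of the intended lossless successor. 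Iterating yields $m\leq H^\alpha(n)$. The backward direction is symmetric: losses in a $\Bw$-step can only \emph{under}-represent the reconstructed ordinal, so recovering $\pi^{-1}(\alpha)\sep\tally^n$ demands $m\geq H^\alpha(n)$.
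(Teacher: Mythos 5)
Your overall architecture---establish 1-bld and termination by inspection and ordinal descent, get the exact witnesses from a lossless simulation of $R_H$, and derive the two inequalities by reducing lossy micro-runs to the macro level where Lem.~\ref{lem-robust} and Lem.~\ref{lem-crct} apply---is exactly the paper's. The 1-bld, termination, and exactness parts are fine. The gap is in the step you yourself identify as the main obstacle: controlling a lossy run \emph{between} two consecutive normal configurations. You propose to handle a mid-phase loss by assigning a Hardy value to the degraded intermediate configuration via ``a mild extension of Lemma~\ref{lem-robust} to $\Seqs$, treating the working segment as part of the ordinal encoding.'' This does not work as stated, because under that reading the Hardy value is not even invariant along a \emph{reliable} phase: after \eqref{eq-rFw10} the configuration is $q_{\Fw_1}\sep w\tally\sep\purify(a_0x)\sep\tally^{n+1}a_0$, and $H^{\pi(w\tally\,\purify(a_0x))}(\cdot)$ corresponds to the fundamental sequence taken at index $1$ rather than $n+2$; the correct value is only recovered once all $n+2$ tallies have been transferred, the $a_0$'s in the counter serving as a record of pending transfers. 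Any usable invariant must therefore account for the mixed $\{\tally,a_0\}$ counter and the unfinished transfer, and must be shown monotone under losses of $a_0$'s and of $\tally$'s separately (which affect the two segments in opposite ways). You have neither defined such an invariant nor verified its monotonicity, so the inductive step is not actually proved.

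The paper avoids assigning semantics to intermediate configurations altogether. It introduces the atomic embedding $\atomi$ and proves commutation inclusions over a language $L_1$ of well-formed intermediate configurations, namely ${\atoml}\comp\Fw_{\ref{eq-rFw11}}\subseteq\Fw_{\ref{eq-rFw11}}\comp{\atoml}$ and ${\atoml}\comp\Fw_{\ref{eq-rFw13}}\subseteq\Fw_{\ref{eq-rFw11}}\comp\Fw_{\ref{eq-rFw13}}\comp{\embedd}$, which push every atomic loss to the end of the phase (one case genuinely requires inserting an extra application of \eqref{eq-rFw11}, which is why a plain ``losses only make things smaller'' argument is too coarse). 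A lossy phase thereby becomes a reliable phase composed with $\embedd$, hence a single step of $(R_1)_\embedd$ by Lem.~\ref{lem-FwBwcor}, and only then are lemmas \ref{lem-robust} and \ref{lem-crct} invoked, on sequences of $\mathrm{Confs}$ where they actually apply. The backward inequality needs the symmetric commutation $\Bw_{j}\comp{\atoml}\subseteq{\atoml}\comp\Bw_{j}$ over $L_1$; your one-line claim that losses ``can only under-represent the reconstructed ordinal'' glosses over the same difficulty. To repair your proof you must either carry out this commutation argument or construct and verify the extended invariant in full.
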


\section{The Parametric Complexity of LR[1-bld]}\label{sec-ack}
Now equipped with suitable encodings for Hardy computations, we can
turn to the main result of the paper: Prop.~\ref{prop-lrr} below shows
the $\F_{\omega^k}$-hardness of \mbox{$(k+2)$-LR[1-bld]}.  As we
obtain almost matching upper bounds in \autoref{sub-up}, we deem this
to be rather tight.

\subsection{Lower Bound}\label{sub-low}

Thanks to the relations over $\Sigma_{k\tally}\uplus\{\sep\}$ defined
in \autoref{sec-hardy}, we know that we can weakly compute with $\Fw$ a
``budget space'' as a unary counter of size $F_{\omega^k}(n)$, and
later check that this budget has been maintained by running through
$\Bw$.  We are going to insert the simulation of an
$\F_{\omega^k}$-hard problem between these two phases of budget
construction and budget verification, thereby constructing
$\F_{\omega^k}$-hard instances of $(k+2)$-LR[1-bld].
\begin{proposition}\label{prop-lrr}
  Let $k\geq 2$.  Then $(k+2)$-LR[1-bld] is $\F_{\omega^k}$-hard.
\end{proposition}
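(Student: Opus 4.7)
The plan is to reduce an $\F_{\omega^k}$-hard problem—for concreteness, acceptance by a deterministic Turing machine $M$ running in space bounded by $F_{\omega^k}$—to $(k+2)$-LR[1-bld] via the ``build budget, simulate, verify budget'' scheme suggested just before the statement. Given $M$ and an input $x$ of size $n$, the reduction outputs the 1-bld rational relation $R = \Fw \cup R_{\mathrm{sim}} \cup \Bw \cup R_{\mathrm{switch}}$ over $\Sigma_{k\tally}\uplus\{\sep\}$ (exactly $k+2$ symbols), together with source word $w_0 = q_\Fw\sep\sep\pi^{-1}(\Omega_k)\sep\tally^n$ and target word $w_f = q_\Bw\sep\sep\pi^{-1}(\Omega_k)\sep\tally^n$, so that $M$ accepts $x$ iff $w_0 \,R_\embedd^\trc\, w_f$.

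The middle relation $R_{\mathrm{sim}}$ implements $M$'s transitions as local, length-preserving, rational rewrites on a $\Seqs$-style sequence whose counter segment now plays the role of the TM tape: the tallies produced at the end of the $\Fw$ phase are first initialised (via $R_{\mathrm{switch}}$) by overlaying $x$ encoded in binary over $\{a_0,a_1\}$ and leaving the remaining cells blank, then $M$ is simulated cell by cell using techniques analogous to those illustrated by~(\ref{eq-rFw11}), and finally, when $M$ reaches its accepting state, $R_{\mathrm{switch}}$ overwrites every cell back to $\tally$ while flipping the state segment from $q_{\mathrm{sim}}$ to $q_\Bw$. Each such elementary rule moves at most one head position or copies one tape cell, so every sub-relation can be expressed as the closure of a 1-bld rational relation over $\Seqs$, without introducing any new symbol outside $\Sigma_{k\tally}\uplus\{\sep\}$.

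Completeness is direct: if $M$ accepts $x$ within space $F_{\omega^k}(n)$, the canonical forward Hardy derivation produces the maximal budget $F_{\omega^k}(n)$ by Lemma~\ref{lem-corr}, the faithful TM simulation fits inside that budget, and the canonical backward Hardy derivation reconstructs $w_f$. Soundness, where the main obstacle lies, hinges entirely on robustness: any derivation $w_0 \,R_\embedd^\trc\, w_f$ must visit an intermediate configuration of the form $q_\Fw\sep\sep\sep\tally^m$ at the end of the $\Fw$ phase and one of the form $q_\Bw\sep\sep\sep\tally^{m'}$ at the start of the $\Bw$ phase, and Lemma~\ref{lem-corr} gives $m\leq F_{\omega^k}(n)\leq m'$ while the only rules available in between can shrink the word but never lengthen it, forcing $m=m'=F_{\omega^k}(n)$ and ruling out any loss during simulation—so the simulated computation is genuine and $M$ really accepts $x$. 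The delicate part of the proof will be designing $R_{\mathrm{sim}}$ and $R_{\mathrm{switch}}$ so that no lossy shortcut can either create tally symbols out of nothing, conflate distinct tape cells, or silently corrupt the fixed-size state segment without strictly shortening the counter, exactly in the spirit of what was done for $\Fw$ and $\Bw$ in \autoref{sub-hardy-comp}.
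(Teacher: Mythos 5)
Your proposal follows essentially the same route as the paper: the paper reduces from $F_{\omega^k}$-bounded semi-Thue reachability (a space-bounded machine in disguise) and uses exactly your three-phase scheme --- $\Fw$ to build the budget, a length-preserving simulation relation sandwiched between switching relations $\Init$ and $\Fin$, then $\Bw$ to verify the budget --- with soundness secured by the same squeeze $F_{\omega^k}(n)\geq\ell_1\geq|z_1|\geq|z_2|\geq\ell_2\geq F_{\omega^k}(n)$ obtained from Lemma~\ref{lem-corr} together with the fact that lossy length-preserving steps can only shrink the word. The one slip is notational: $\pi^{-1}(\Omega_k)$ is undefined, since $\pi$ is a bijection onto the ordinals strictly \emph{below} $\Omega_k$; the source and target must instead carry $\pi^{-1}((\Omega_k)_n)=a_{k-1}^n\tally$, which suffices because $H^{(\Omega_k)_n}(n)=F_{\omega^k}(n)$.
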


\paragraph{Bounded Semi-Thue Reachability.}  The problem we reduce
from is a space-bounded variant of the \emph{semi-Thue reachability problem}
(aka \emph{semi-Thue word problem}):
as already mentioned in
\autoref{sec-pep}, a \emph{semi-Thue system} $\Upsilon$ over an
alphabet is a finite set of rules $(u,v)$ in
$\Sigma^\ks\times\Sigma^\ks$, defining a \emph{rewrite relation}
$\to_\Upsilon$.
\begin{vshort}
  The semi-Thue reachability problem, or R[Rewr], is a reliable
  version of the lossy reachability problem.
\end{vshort}
\begin{vlong}
\begin{decpb}{Semi-Thue Reachability}{R[Rewr]}
\item[input] A semi-Thue system $\Upsilon$ over an alphabet $\Sigma$,
  and words $y$ and $y'$ in $\Sigma^\ks$.
\item[question] Is it the case that $y\to^\trc_\Upsilon y'$?
\end{decpb}
\end{vlong}
This problem is in general undecidable, as one can express the ``next
configuration'' relation of a Turing machine as a semi-Thue system.
Its \emph{$F_{\omega^k}$-bounded} version for some $k\geq 1$ takes as
input an instance $\tup{\Upsilon,y,y'}$ of size $n$  where, if 
$y\to^\trc_\Upsilon x$, then $|x|\leq F_{\omega^k}(n)$.
This is easily seen to be hard for $\F_{\omega^k}$, even for a
binary alphabet $\Sigma$%
\begin{vshort}
.
\end{vshort}
\begin{vlong}
:
\begin{fact}
The $F_{\omega^k}$-bounded semi-Thue reachability problem is
$\F_{\omega^k}$-complete, already for $|\Sigma| = 2$.
\end{fact}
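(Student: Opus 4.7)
We argue separately for membership and for hardness.

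For membership in $\F_{\omega^k}$, observe that on an $F_{\omega^k}$-bounded instance $\tup{\Upsilon,y,y'}$ of size $n$, every word $x$ reachable from $y$ has length at most $F_{\omega^k}(n)$.  A non-deterministic procedure can therefore decide $y\to^\trc_\Upsilon y'$ by guessing a rewrite sequence step by step, keeping only the current word in memory; this uses space $F_{\omega^k}(n)$.  Since $\F_{\omega^k}$ absorbs non-determinism and the usual time/space distinctions for $k\geq 1$, this places the problem in $\F_{\omega^k}$.

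For $\F_{\omega^k}$-hardness, we reduce from an arbitrary language $L\in\F_{\omega^k}$.  By definition, $L$ is decided by a deterministic Turing machine $M$ in time $F_{\omega^k}(p(n))$ for some $p\in\bigcup_{\beta<\omega^k}\FGH{\beta}$, working over a fixed tape alphabet $\Gamma$.  On input $w$ of length $n$, we construct in a lower-level complexity class a semi-Thue system $\Upsilon_w$ over $\Sigma=\Gamma\uplus Q\uplus\{\#\}$ and words $y,y'$ such that $y\to^\trc_{\Upsilon_w}y'$ iff $M$ accepts $w$.  This uses the textbook local simulation of Turing machines by string rewriting: configurations are encoded as $\# u\,q\,v\,\#$, with $q\in Q$ marking the head position, and each transition of $M$ is captured by a constant-size rewrite rule.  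The initial configuration $y$ depends on $w$ and the target $y'$ is a normal form encoding acceptance (with auxiliary rules that clean up the tape to reach a fixed word).  Since $M$ runs in time, and hence uses space, at most $F_{\omega^k}(p(n))$, every reachable $x$ satisfies $|x|\leq F_{\omega^k}(p(n))$, which fits the $F_{\omega^k}$-bound on an instance of appropriate size.

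To reduce to a binary alphabet $\Sigma=\{0,1\}$, we fix a block length $\ell=\lceil\log_2|\Sigma|\rceil$ and encode each symbol of the previous construction by a distinct block in $\{0,1\}^\ell$.  Because the rewrite rules of $\Upsilon_w$ are of bounded width, their encoded versions operate on blocks of bounded length, so they remain well-defined as rules over $\{0,1\}^\ks$; reachability is preserved, and the polynomial blow-up in length still fits inside $F_{\omega^k}$.  The reduction itself is computable in a class far below $\FGH{\omega^k}$, so it is an admissible $\F_{\omega^k}$-reduction.

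The main obstacle is making sure that the simulation is tight enough that the $F_{\omega^k}$-bound on reachable words continues to hold after the binary re-encoding, and that no rewrite rule can blow up a configuration beyond $F_{\omega^k}(n')$ where $n'$ is the size of the reduced instance.  Both follow from the fact that $M$ is space-bounded by its time bound and that the block-encoding introduces only a constant multiplicative overhead on lengths.
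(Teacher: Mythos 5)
The paper never actually proves this Fact (it is introduced with ``this is easily seen to be hard \dots''), so there is no official argument to compare against; your outline --- simulate a Turing machine by a semi-Thue system, then re-encode over a binary alphabet --- is clearly the intended route, and the membership half is fine. In the hardness half, however, two steps are asserted rather than argued, and one of them fails if done naively. The smaller issue is the size accounting: the $F_{\omega^k}$-bound in the problem definition is measured against the size $n'$ of the \emph{produced} instance, whereas your simulation only guarantees $|x|\leq F_{\omega^k}(p(n))$ with $n$ the original input length and $p$ possibly enormous (for $k\geq 2$ the class $\bigcup_{\beta<\omega^k}\FGH{\beta}$ already contains Ackermannian functions). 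Your phrase ``an instance of appropriate size'' is doing real work: the reduction must explicitly compute $p(n)$ and pad the instance (e.g.\ pad $y$ with blanks) so that $n'\geq p(n)$, which is admissible only because the reductions for $\F_{\omega^k}$ are themselves drawn from $\bigcup_{\beta<\omega^k}\FGH{\beta}$. Without this padding the constructed instance simply does not satisfy the boundedness promise, so it should be made explicit.

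The genuine gap is the claim that block-encoding into $\{0,1\}^\ell$ preserves reachability. It does not in general: an encoded left-hand side $\mathrm{enc}(u)$ can occur inside $\mathrm{enc}(x)$ at a position that is not a multiple of $\ell$, and rewriting there produces words outside the image of the encoding and can enable spurious derivations of $\mathrm{enc}(y')$. For instance with $\ell=2$ and codes $a\mapsto 00$, $b\mapsto 01$, $c\mapsto 11$, $d\mapsto 10$, the word $\mathrm{enc}(acb)=001101$ contains $\mathrm{enc}(bd)=0110$ at an odd offset. One must either choose a self-synchronizing code (say $a_i\mapsto 10^{i+1}$, so that every codeword occurrence starting with a $1$ is block-aligned, and then check that left-hand sides cannot match a proper prefix of a longer codeword at their right end), or exploit the structure of the TM rules (every left-hand side contains a state symbol, whose code can be arranged never to occur misaligned in a valid configuration encoding). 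Either fix is standard and short, but as written ``reachability is preserved'' is precisely the point at issue when the alphabet is squeezed down to two letters, and it is where a naive encoding breaks.
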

\end{vlong}

\paragraph{Reduction.}  Let $\tup{\Upsilon,y,y'}$ be an instance of
size $n>1$ of $F_{\omega^k}$-bounded R[Rewr] over the
two-letters alphabet $\{a_0, a_1\}$.  We build a $(k+2)$-LR[1-bld]
instance in which the rewrite relation $R$ performs the following
sequence:
\begin{enumerate}
\item Weakly compute a budget of size $F_{\omega^k}(n)$, using $\Fw$ described in \autoref{sec-hardy}.\label{c-forward}
\item In this allocated space, simulate the rewrite steps
  of $\Upsilon$ starting from $y$. \label{c-simul}
\item Upon reaching $y'$, perform a reverse Hardy computation using
  $\Bw$ and check that we obtain back the initial Hardy
  configuration.  This check ensures that the lossy rewrites were in
  fact reliable (i.e., no symbols were lost).\label{c-backward}
\end{enumerate}

For Phase~\ref{c-simul}, we define a $\tally$-padded version $\Sim$ of
$\to_\Upsilon$ that works over $\Seqs$:
\begin{align}
  \Sim&\eqdef\{(q_\Sim\sep\sep\sep u\tally^p, q_\Sim\sep\sep\sep
  v\tally^q)\mid u\to_{\Upsilon}v,|u|+p=|v|+q\}\;.
  \shortintertext{This is a length-preserving rational relation. We
    define two more length-preserving rational relations $\Init$ and $\Fin$
    that initialize the simulation with $y$ on the budget space, and
    launch the verification phase if $y'$ appears there, allowing to
    move from Phase~\ref{c-forward} to Phase~\ref{c-simul} and from
    Phase~\ref{c-simul} to Phase~\ref{c-backward}, respectively:}
  \Init &\eqdef \{ (q_\Fw \sep \sep \sep \tally^{\ell+|y|}, q_\Sim \sep
    \sep \sep y\tally^\ell)\mid\ell\geq 0\}\;,\\
  \Fin &\eqdef \{ (q_\Sim \sep \sep \sep y'\tally^\ell, q_\Bw \sep
    \sep \sep \tally^{\ell + |y'|} )\mid\ell\geq 0\}\;.
  \shortintertext{Finally, because
    $F_{\omega^k}(n)=H^{(\Omega_k)_n}(n)$, we define
    our source and target by}
  w&\eqdef q_\Fw \sep \sep a_{k-1}^n\tally \sep \tally^n\:,\qquad\quad
  w'\eqdef q_\Bw \sep \sep a_{k-1}^n\tally \sep \tally^n\;,
\end{align}
and we let $R$ be the 1-bld rational relation $\Fw \cup \Init \cup \Sim
\cup \Fin \cup \Bw$.
\begin{claim}
The given R[Rewr] instance is positive if and only if $\tup{R,w,w'}$
is a positive instance of the $(k+2)$-LR[1-bld] problem.
\end{claim}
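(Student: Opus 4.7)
The proof of the claim splits naturally into the two directions of the biconditional.

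\emph{Forward direction.} Assume $y\to_\Upsilon^\trc y'$, with every intermediate word of length at most $F_{\omega^k}(n)$. We build a witnessing derivation in $R_\embedd^\trc$ in five stages matching the five disjuncts of $R$. By the weak implementation lemma (Lem.~\ref{lem-corr}), $\Fw_\embedd^\trc$ provides a derivation from $w=q_\Fw\sep\sep a_{k-1}^n\tally\sep\tally^n$ to $q_\Fw\sep\sep\sep\tally^{F_{\omega^k}(n)}$. One $\Init$ step---legal since $|y|\leq n\leq F_{\omega^k}(n)$---yields $q_\Sim\sep\sep\sep y\tally^{F_{\omega^k}(n)-|y|}$. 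Each rewrite of the chosen derivation $y\to_\Upsilon^\trc y'$ is mirrored by a single $\Sim$ step, the $F_{\omega^k}(n)$ bound on intermediate lengths ensuring enough tallies remain to balance lengths. A $\Fin$ step produces $q_\Bw\sep\sep\sep\tally^{F_{\omega^k}(n)}$, and Lem.~\ref{lem-corr} once more supplies a $\Bw_\embedd^\trc$-derivation down to $w'$.

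\emph{Backward direction: phase decomposition.} Assume $(w,w')\in R_\embedd^\trc$. Each visited configuration must belong to $\Seqs$ for any subrelation of $R$ to be applicable. Because the separators $\sep$ are reliable and the state segment is a binary word of fixed length $\lceil\log|Q|\rceil$, any loss inside them would leave an ill-formed sequence from which no rule could fire, blocking the derivation before it reaches $w'$. Hence both are preserved throughout, and the current state symbol uniquely restricts which subrelation may fire: only $\Fw$ or $\Init$ from $q_\Fw$, only $\Sim$ or $\Fin$ from $q_\Sim$, only $\Bw$ from $q_\Bw$. Since $\Init$ and $\Fin$ change the state monotonically, the derivation must decompose as $\Fw_\embedd^\trc\cdot\Init_\embedd\cdot\Sim_\embedd^\trc\cdot\Fin_\embedd\cdot\Bw_\embedd^\trc$.

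\emph{Backward direction: reliability of the simulation.} Write $m_1$ for the number of tallies in the counter segment at the end of the $\Fw$ phase and $m_2$ for the corresponding number at the start of the $\Bw$ phase. Lem.~\ref{lem-corr} applied to the $\Fw$ phase, combined with robustness (Lem.~\ref{lem-robust}) to absorb any losses on the initial ordinal code, yields $m_1\leq F_{\omega^k}(n)$; applied dually to the $\Bw$ phase and the ordinal code of $w'$, it yields $m_2\geq F_{\omega^k}(n)$. Since the subrelations $\Init$, $\Sim$ and $\Fin$ are length-preserving on the counter segment while $R_\embedd$-losses can only shorten it (the other segments being empty or reliable during this stretch), the counter length is non-increasing from $m_1$ to $m_2$. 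Squeezing gives $m_1=m_2=F_{\omega^k}(n)$, so no loss occurs in the counter throughout the middle phase. Consequently each $\Sim$ step fires verbatim, and the underlying rules $u\to_\Upsilon v$ read off successive $\Sim$ steps assemble into a genuine derivation $y\to_\Upsilon^\trc y'$ respecting the $F_{\omega^k}(n)$ bound.

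The main obstacle is the phase-decomposition step: one must rule out scenarios in which a well-placed loss allows an unintended subrelation to fire---e.g.\ by accidentally producing a state prefix matching $q_\Sim$ or $q_\Bw$ before the intended phase has started, or by interleaving $\Fw$ and $\Bw$ steps. The fixed-length binary state encoding and the reliability of separators in $\Seqs$ are designed precisely to preclude this, but turning the informal squeeze into a proof requires a careful check against the concrete specifications of $\Fw$ and $\Bw$.
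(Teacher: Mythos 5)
Your proposal is correct and follows essentially the same route as the paper: a direct construction for the forward direction via Lem.~\ref{lem-corr}, and for the converse a phase decomposition (justified by reliability of the separators and the state segment) followed by the squeeze $F_{\omega^k}(n)\geq \ell_1\geq |z_1|\geq |z_2|\geq \ell_2\geq F_{\omega^k}(n)$ forcing the $\Sim_\embedd$ steps to be reliable. The paper's own proof is exactly this argument, stated slightly more tersely.
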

\begin{proof}
  Suppose $w\ R_\embedd^\trc\ w'$. It is easy to see that
  the separator symbol ``$\sep$'' and the encodings of states from $Q$
  are reliable. Because of the way the relations treat the states, we
  in fact get 
\begin{equation*}
w\ \Fw_\embedd^\trc\ (q_\Fw\sep\sep\sep\tally^{\ell_1})\ \Init_\embedd\ (q_\Sim
\sep \sep \sep z_1) \ \Sim_\embedd^\trc\ (q_\Sim \sep \sep \sep z_2)
\ \Fin_\embedd\ (q_\Sim \sep \sep \sep
\tally^{\ell_2})\ \Bw_\embedd^\trc\ w'
\end{equation*}
for some strings $z_1, z_2$ and naturals $\ell_1, \ell_2 \in \Nat$. By
Lem.~\ref{lem-corr}, we have $\ell_1\leq F_{\omega^k}(n)$ and $\ell_2\geq
F_{\omega^k}(n)$.  Since $\Init$, $\Sim$, and $\Fin$ are
length-preserving, we get 
\begin{equation}
F_{\omega^k(n)} \geq \ell_1 \geq |z_1| \geq |z_2| \geq \ell_2 \geq F_{\omega^k(n)}
\end{equation} Thus equality holds throughout, and
therefore the lossy steps of $\Sim_\embedd$ in Phase~\ref{c-simul}
were actually reliable, i.e.\ were steps of $\Sim$. This allows us to
conclude that the original R[Rewr] instance was positive.

Suppose conversely that the R[Rewr] instance is positive.  We can
translate this into a witnessing run for $w\ R_\embedd^\trc\ w'$, in
particular, for
$w\ \Fw^\trc\ \comp\ \Init\ \comp\ \Sim^\trc\ \comp\ \Fin\ \comp\ \Bw^\trc\ w'$,
because any successful run from the R[Rewr] instance can be plugged
into the $\Sim^\trc$ phase; Lem.~\ref{lem-corr} and the fact that the
configurations of $\Upsilon$ are bounded by
$F_{\omega^k}(n)$ together ensure that this can be done.
\end{proof}

\subsection{Upper Bound}\label{sub-up}
\paragraph{Well-Structured Transition Systems.}  
As a preliminary, let us show that the lossy rewriting problem is
decidable.  Indeed, the relation $R_\embedd$ can be viewed as the
transition relation of an infinite transition system over the state
space $\Sigma^\ks$.  Furthermore, by Higman's Lemma, the subword
embedding ordering $\embeds$ is a \emph{well quasi ordering} (wqo)
over $\Sigma^\ks$, and the relation $R_\embedd$ is \emph{compatible}
with it: if $u\mathbin{R_\embedd}v$ and $u\embeds u'$ for some
$u,v,u'$ in $\Sigma^\ks$, then there exists $v'$ in $\Sigma^\ks$
s.t.\ $u'\mathbin{R_\embedd}v'$: here it suffices to use $v'=v$ by
transitivity of $\embedd$.

A transition system $\?S=\tup{S,\to,\leq}$ with a wqo $(S,\leq)$ as
state space and a compatible transition relation $\to$ is
called a \emph{well-structured transition system} (WSTS), and several
problems are decidable on such systems under very light effectiveness
assumptions~\citep{abdulla2000c,finkel98b}, among which the
\emph{coverability problem}, which asks given a WSTS $\?S$ and two
states $s$ and $s'$ in $S$ whether there exists $s''\geq s'$
s.t.\ $s\to^\trc s''$.  The lossy rewrite problem when $w\not\embedd
w'$ can be restated as a
coverability problem for the WSTS $\tup{\Sigma^\ks,R_\embedd,\embeds}$
and $w$ and $w'$, since if there exists $w''\embedd w'$ with
$w\mathbin{R_\embedd^\trc} w''$, then $w\mathbin{R_\embedd^\trc} w'$
also holds by transitivity of $\embedd$.

\paragraph{Parameterized Upper Bound.}  In many cases, a \emph{combinatory
  algorithm} can be employed instead of the classical backward
coverability algorithm for WSTS:
we can find a particular coverability witness
$w'=w_0\mathbin{{\embeds}\comp\mathbin{R}^{-1}}w_1\cdots
w_{\ell-1}\mathbin{{\embeds}\comp\mathbin{R}^{-1}}w_\ell\embeds w$ of
length $\ell$ \emph{bounded} by a function akin to $F_{\omega^{k-1}}$ using the
Length Function Theorem of \citep{SS-icalp2011}.  This is a generic
technique for coverability explained in \citep{wqo}, and the reader
will find it instantiated for $(k+2)$-LR[Rat] in App.~\ref{ax-upb}:
\begin{proposition}[Upper Bound]\label{prop-up}
  The problem $(k+2)$-LR[Rat] is in $\F_{\omega^{k+1}}$.
\end{proposition}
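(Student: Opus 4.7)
The plan is to reduce $(k+2)$-LR[Rat] to a coverability question on the WSTS $\tup{\Sigma^\ks,R_\embedd,\embeds}$ identified in the preliminary observation, and then bound the length of a minimal backward witness via the Length Function Theorem of~\citep{SS-icalp2011}.  Recall that $(w,w')\in R_\embedd^\trc$ iff some $w''\embedd w'$ is reachable from $w$ under $R_\embedd$, so a forward witness has the shape $w=u_0\mathbin{R_\embedd}u_1\mathbin{R_\embedd}\cdots\mathbin{R_\embedd}u_\ell\embedd w'$.

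The first step is to restrict attention to \emph{minimal} witnesses, those satisfying $u_i\not\embeds u_j$ for $i<j$.  Such minimality is enforceable: whenever $u_i\embeds u_j$ with $i<j$, compatibility of $R_\embedd$ with $\embeds$ (through transitivity of $\embedd$) lets us splice the witness at step $i$ into a strictly shorter one.  Consequently, $u_0,u_1,\dots,u_\ell$ is a bad sequence of $(\Sigma^\ks,\embeds)$, and its length is finite by Higman's Lemma.

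The second step is to control the sizes $|u_i|$: a single $R_\embedd$-step can in principle expand a string arbitrarily, but the standard fix is to refine each such step by threading the state of a transducer presenting $R$ inside the configuration, so that each atomic step alters the length by at most a constant.  This yields a polynomial control function $g(i)=p(n)\cdot i+p(n)$, where $n$ is the input size.  The Length Function Theorem then bounds the length of bad sequences over $(\Sigma^\ks,\embeds)$ with $|\Sigma|=d$ and primitive-recursive control $g$ by a function in $\FGH{\omega^{d-1}}$; specialising to $d=k+2$ and polynomial $g$ yields $\ell\leq F_{\omega^{k+1}}(q(n))$ for some polynomial $q$, and the same bound applies to every $|u_i|$.

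A nondeterministic procedure can then guess and verify the (refined) witness step by step, storing at most two consecutive configurations of size in $\FGH{\omega^{k+1}}$, placing $(k+2)$-LR[Rat] in $\F_{\omega^{k+1}}$.  The main obstacle lies in the control-function step: the refinement of $R_\embedd$ into bounded-growth atomic transitions must be done while preserving the WSTS structure and the minimality argument, so that we still obtain a bad sequence in $(\Sigma^\ks,\embeds)$ of the desired length.  Once this is carried out, the Length Function Theorem delivers the bound at the level of the hierarchy matching the alphabet size through the correspondence $|\Sigma|-1=k+1$.
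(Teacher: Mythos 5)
There is a genuine gap in your argument, and it sits exactly where the paper's proof does its real work. You analyse a \emph{forward} witness $w=u_0\mathbin{R_\embedd}u_1\cdots\mathbin{R_\embedd}u_\ell\embedd w'$ and claim that whenever $u_i\embeds u_j$ with $i<j$ the run can be spliced shorter. Compatibility goes the other way: since $R_\embedd={\embedd}\comp R\comp{\embedd}$, it is the \emph{larger} configuration that can simulate the smaller one, so splicing is available only when $u_j\embeds u_i$ (a later configuration embeds into an earlier one), in which case $u_i\mathbin{R_\embedd}u_{j+1}$. Minimisation therefore enforces $u_j\not\embeds u_i$ for all $i<j$, i.e.\ badness of the \emph{reversed} sequence $u_\ell,\dots,u_0$, while your control function runs forward from $|u_0|=|w|$; the bad sequence and the controlled sequence point in opposite directions, so the Length Function Theorem does not apply. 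Moreover the forward control itself is not established: a spliced-minimal forward witness gives no bound on $|u_{i+1}|$ in terms of $|u_i|$ for a general rational $R$, and the ``threading transducer states'' repair you defer is precisely the unresolved obstacle --- it changes the configuration space and the wqo, and it is not how this difficulty is actually overcome.

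The paper's proof avoids both problems by running the \emph{backward} coverability analysis: it builds $I_0=\embeds(\{w'\})$ and $I_{i+1}=I_i\cup\embeds(R^{-1}(I_i))$ and extracts an $\embeds$-minimal element $u_{i+1}$ of $I_{i+1}\setminus I_i$. Badness of $u_0,u_1,\dots$ is then automatic (if $u_i\embeds u_j$ for $i<j$, then $u_j\in I_i\subseteq I_{j-1}$, a contradiction), and, crucially, \emph{minimality} of $u_{i+1}$ is what bounds its length: in a run of the normalized transducer producing some $v_i\embedd u_i$, any input segment read between two consecutive output letters of $u_i$ and longer than $|Q|$ could be cut without losing the covering, so $|u_{i+1}|\leq|Q|\cdot(|u_i|+1)$. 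This yields a bad sequence starting at $|w'|$ with polynomial control, in the correct direction for the Length Function Theorem, giving the $h^{\omega^{\omega^{k+1}}}$ bound on lengths and sizes and hence membership in $\F_{\omega^{k+1}}$. You should replace your forward analysis by this backward one; the nondeterministic guess-and-verify wrapper at the end of your proposal can then be kept essentially as is.
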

The small gap of complexity we witness here with Prop.~\ref{prop-lrr}
stems from the encoding apparatus, which charges us with one extra
symbol.  We have not been able to close this gap; for instance, the
encoding breaks if we try to work without our separator symbol
``$\sep$''.

\section{Applications}\label{sec-rel}
We apply in this section the proof of Prop.~\ref{prop-lrr} to prove
parametric complexity lower bounds for several problems.  In three
cases (propositions \ref{prop-ratep}, \ref{prop-regep}, and \ref{prop-lcs}
below), we proceed by a reduction from the LR problem, but take
advantage of the specifics of the instances constructed in the proof
Prop.~\ref{prop-lrr} to obtain tighter parameterized bounds.  The
hardness proof for the LT problem in Prop.~\ref{prop-term} requires
more machinery, which needs to be incorporated to the construction
of \autoref{sub-low} in order to obtain a reduction.

\paragraph{Rational Embedding.}  We first deal with the classical
embedding problem: We reduce from a $(k+2)$-LR[Rat] instance and use
Prop.~\ref{prop-lrr}.  The issue is to somehow convert an iterated
composition into an iterated concatenation---the idea is similar to
the one typically used for proving the undecidability of PCP.

\begin{proposition}\label{prop-ratep}
  Let $k\geq 2$.  Then $(k+2)$-EP[Rat] is $\F_{\omega^k}$-hard.
\end{proposition}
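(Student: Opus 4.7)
The plan is to reduce from $(k+2)$-LR[1-bld], whose $\F_{\omega^k}$-hardness was just established in Prop.~\ref{prop-lrr}, using a PCP-style shifted encoding that turns an entire sequence of lossy $R$-steps into a single embedding query.  Let $\tup{R,w,w'}$ be the $(k+2)$-LR[1-bld] instance produced by that proposition; crucially, every word appearing in $R$ or as $w$ or $w'$ lies in $\Seqs$, and in particular carries exactly three internal $\sep$-symbols.  Unfolding the definition of $R_\embedd^\trc$ and using transitivity of $\embedd$, one has $(w,w')\in R_\embedd^\trc$ iff either $w=w'$ or there exist pairs $(u_0,v_0),\dots,(u_{\ell-1},v_{\ell-1})\in R$ with $\ell\geq 1$, $u_0\embeds w$, $u_{i+1}\embeds v_i$ for all $0\leq i<\ell-1$, and $w'\embeds v_{\ell-1}$.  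The case $w=w'$ I would dispatch by returning a trivially positive EP[Rat] instance; the main reduction handles $w\neq w'$ and is required to exhibit $\ell\geq 1$.

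I would associate with such a witness the shifted pair
\begin{align*}
  U&\eqdef u_0\sep u_1\sep\cdots\sep u_{\ell-1}\sep w',&
  V&\eqdef w\sep v_0\sep v_1\sep\cdots\sep v_{\ell-1},
\end{align*}
and declare the rational relation
$$T\;\eqdef\;\{(u,w\sep v)\mid (u,v)\in R\}\cdot\{(\sep u,\sep v)\mid (u,v)\in R\}^\ks\cdot\{(\sep w',\varepsilon)\},$$
which is rational by closure of rational relations under concatenation, Kleene star, and finite relations, and which stays over the same alphabet $\Sigma=\Sigma_k\cup\{\tally,\sep\}$ of size $k+2$: no fresh block separator is introduced, as the existing $\sep$ is reused.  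The reduction itself is trivially polynomial, well below any reduction in $\bigcup_{\beta<\omega^k}\FGH\beta$.

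The correctness claim is that $T\cap{\embeds}\neq\emptyset$ iff the LR instance is positive, and the crux is a separator counting argument.  A pair $(U,V)\in T$ has exactly the shifted shape above with $(u_i,v_i)\in R$; since each of the $\ell+1$ top-level blocks of $U$ (resp.\ $V$) belongs to $\Seqs$ and therefore contributes three internal $\sep$-symbols, both $U$ and $V$ contain precisely $4\ell+3$ copies of $\sep$.  An embedding $U\embeds V$ must map these separators bijectively and in order, forcing the $\ell+1$ top-level blocks to embed in lockstep: $u_0\embeds w$, $u_{i+1}\embeds v_i$ for $0\leq i<\ell-1$, and $w'\embeds v_{\ell-1}$, which combined with $(u_i,v_i)\in R$ is exactly the $R_\embedd^\trc$-witness recalled above.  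The converse direction is a direct assembly of a given witness into some $(U,V)\in T$, where the same counting argument shows that an embedding $U\embeds V$ is indeed realisable by threading each blockwise embedding through the aligned separators.

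The main obstacle is the tight alphabet budget: a naive reduction that adds a fresh block separator would spend an extra symbol and only yield $(k+3)$-EP[Rat] hardness.  Reusing $\sep$ is sound only because the LR instance constructed for Prop.~\ref{prop-lrr} keeps every configuration strictly inside $\Seqs$, so the internal $\sep$-count remains constant across the whole witness; that uniformity is exactly what allows the separator counting argument to pin the shifted alignment of $U$ and $V$ down to the intended blockwise embeddings.
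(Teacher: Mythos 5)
Your proposal is correct and follows essentially the same route as the paper: a PCP-style shifted concatenation $\{(\varepsilon,w\sep)\}\cdot R\cdot(\{(\sep,\sep)\}\cdot R)^{\ks}\cdot\{(\sep w',\varepsilon)\}$ whose embedding witnesses are forced, by equality of the separator counts on both components, to decompose blockwise into exactly the chain $u_0\embeds w$, $u_{i+1}\embeds v_i$, $w'\embeds v_{\ell-1}$ certifying $(w,w')\in R_\embedd^\trc$. The paper's proof is the same construction (with the constant blocks $\$w'\$$ and $w\$$ placed at the other ends of the two components) followed by the identical observation that $\$$ can be instantiated as $\sep$ because the instances from Prop.~\ref{prop-lrr} keep the $\sep$-count invariant.
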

\begin{proof}
  Assume without loss of generality that $w\neq w'$ in a
  $(k+2)$-LR[Rat] instance $\tup{R,w,w'}$.  We consider sequences of
  consecutive configurations of
  ${\embedd}\comp(R\comp{\embedd})^\tc$ of
  form \begin{equation}\label{eq-seq-ratep} w=v_0\embedd
  u_0\mathbin{R} v_1\embedd
  u_1\mathbin{R}v_2\embedd\cdots\mathbin{R}v_n\embedd
  u_n=w' \end{equation} that prove the LR instance to be positive.
  Let $\$$ be a fresh symbol; we construct a new relation $R'$ that
  attempts to read the $u_i$'s on its first component and the $v_i$'s
  on the second, using the $\$$'s for synchronization:
  \begin{align}\label{eq-rel-ratep}
  R' &\eqdef  \begin{bmatrix}\$w'\$ \\\$ \end{bmatrix}  
  \cdot 
  \left( R \cdot  \begin{bmatrix} \$ \\ \$ \end{bmatrix}  \right)^+ 
  \cdot
   \begin{bmatrix} \varepsilon \\ w\$ \end{bmatrix} 
  \end{align}
  Observe that in any pair of words $(u,v)$ of $R'$, one finds the
  same number of occurrences of the separator $\$$ in $u$ and $v$,
  i.e.\ we can write $u=\$u_n\$\cdots\$u_0\$$ and
  $v=\$v_n\$\cdots\$v_0\$$ with $n>0$, verifying $v_0=w$, $u_n=w'$,
  and $u_i\mathbin{R}v_{i+1}$ for all $i$.  
\begin{vlong}
\par
\end{vlong}
  Assume $u\embeds v$: the embedding ordering is restricted by the
  $\$$ symbols to the factors $u_i\embeds v_i$.  We can
  therefore exhibit a sequence of form \eqref{eq-seq-ratep}.
  Conversely, given a sequence of form \eqref{eq-seq-ratep},
  the corresponding pair $(u,v)$ belongs to $R'\cap{\embeds}$.

  In order to conclude, observe that we can set $\$\eqdef\sep$ in the
  proof of Prop.~\ref{prop-lrr} and adapt the previous arguments
  accordingly, since ``$\sep$'' is preserved by $R$ and appears in both
  $w$ and $w'$ in the particular instances we build.
\end{proof}
\begin{vshort}
\enlargethispage{1.5em}
\end{vshort}
\paragraph{Synchronous Embedding.}
Turning now to the case of synchronous relations, we proceed as in the
previous proof%
\begin{vlong}
 for Prop.~\ref{prop-ratep}
\end{vlong}
, but employ an extra padding
symbol $\pad$ to construct a length-preserving version of the relation
$R$ in an instance of $(k+2)$-LR[Sync], allowing us to apply the
Kleene star operator while remaining regular.
\begin{proposition}\label{prop-regep}
  Let $k\geq 2$.  Then $(k+3)$-EP[Sync] is $\F_{\omega^k}$-hard.
\end{proposition}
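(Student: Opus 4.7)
The plan is to imitate the reduction of Proposition~\ref{prop-ratep} and produce a \emph{synchronous} relation by first padding $R$ into a length-preserving rational relation $R_\pad$ with the help of one fresh alphabet symbol $\pad$, so that applying Kleene star stays within the synchronous class.

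Starting from a $(k+2)$-LR[1-bld] instance $\tup{R,w,w'}$ given by Proposition~\ref{prop-lrr}, I extend the alphabet with a fresh symbol $\pad$ to size $k+3$. Since $R$ is 1-bld, it admits a synchronous encoding in which length mismatches are carried by a distinguished padding marker; turning that marker into the symbol $\pad$ yields a length-preserving rational relation $R_\pad\subseteq(\Sigma\uplus\{\pad\})^\ks\times(\Sigma\uplus\{\pad\})^\ks$ whose $\pad$-erasing projection is $R$. Concretely, for each $(x,y)\in R$ with $d=|y|-|x|\in\{-1,0,1\}$, the relation $R_\pad$ contains all balanced pairs $(x\pad^p,y\pad^q)$ with $p,q\geq 0$ and $p-q=d$; the freedom in the exponent $p$ will let me reconcile pad counts between successive segments.

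I then instantiate the template of Proposition~\ref{prop-ratep} with $R_\pad$ in place of $R$,
\[R'\eqdef\begin{bmatrix}\$w'\$\\\$\end{bmatrix}\cdot\left(R_\pad\cdot\begin{bmatrix}\$\\\$\end{bmatrix}\right)^{\!\!+}\cdot\begin{bmatrix}\varepsilon\\w\$\end{bmatrix},\]
taking $\$\eqdef\sep$. Length-preserving rational relations being closed under concatenation and Kleene star, the middle factor is length-preserving; the two fixed end factors add only the constant length discrepancy $|w'|-|w|$; hence $R'$ has bounded length discrepancy and is therefore synchronous.

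For correctness I rerun the embedding analysis of Proposition~\ref{prop-ratep}. Both $\$$ and $\pad$ are fresh symbols, so the $\$$'s still force a one-to-one segmental alignment, and within each aligned pair of segments the $\pad$'s must match $\pad$'s. Given a valid LR chain, a witness $(u,v)\in R'\cap{\embeds}$ is built by choosing, per segment, pad counts $p_i,q_i$ with $p_i-q_i=d_i$ and $p_i\leq q_{i+1}$ (possible thanks to the free choice of $p$ in $R_\pad$); every segment-level embedding $X_i\embeds Y_{i+1}$ then reduces to $x_i\embeds y_{i+1}$. Conversely, stripping pads from any such witness produces an LR chain exactly as in Proposition~\ref{prop-ratep}. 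The main obstacle will be the boundary segment $X_m\embeds w$: as $w$ contains no $\pad$, one needs $p_m=0$, which constrains the chain's terminal pair. I expect to resolve this by exploiting that the chain constructed in Proposition~\ref{prop-lrr} has total length change zero (since $|w|=|w'|$), which forces $\sum_i d_i=0$ and so admits a consistent choice of $q_1,\dots,q_m$, or, failing that, by slightly modifying the boundary of $R'$ to tolerate a short $\pad$-buffer in the first component, which preserves synchrony since the modification has only constant length discrepancy.
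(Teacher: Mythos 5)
Your overall strategy is the paper's: pad $R$ into a length-preserving relation with one fresh symbol $\pad$, reuse the template of Prop.~\ref{prop-ratep}, and set $\$\eqdef\sep$. But the bookkeeping of pad counts in your construction is infeasible, and the fixes you sketch do not repair it. In your $R_\pad$, each pair $(x_i\pad^{p_i},y_i\pad^{q_i})$ has both components of equal total length $L_i=|x_i|+p_i=|y_i|+q_i$. The $\$$-alignment forces $x_i\pad^{p_i}\embeds y_{i+1}\pad^{q_{i+1}}$, hence $L_i\leq L_{i+1}$: the block lengths must be non-decreasing along the chain. Yet your last block must satisfy $x_m\pad^{p_m}\embeds w$, so $p_m=0$ and $L_m=|x_m|\leq|w|$. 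Since $L_i\geq|x_i|$ and the intermediate configurations of the instances from Prop.~\ref{prop-lrr} reach length about $F_{\omega^k}(n)\gg|w|$ (building the budget is the whole point), no consistent choice of $p_i,q_i$ exists, and the completeness direction of the reduction fails. Your first proposed remedy is beside the point: $|w|=|w'|$ controls only the total of the $d_i$, whereas what matters are the intermediate lengths (equivalently, the suffix sums of the $d_i$, which are of order $F_{\omega^k}(n)$, not $0$). Your second remedy cannot work either: the slack to be absorbed is unbounded in the input size, so no ``short $\pad$-buffer'' of constant length discrepancy suffices; and it must sit in the second (larger) component, not the first.

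The paper resolves this by two changes. First, the padding is placed \emph{after} the separator, $R_\pad\eqdef\{(u\$\pad^m,v\$\pad^p)\mid(u,v)\in R,\ |u\$\pad^m|=|v\$\pad^p|\}$, which shifts the alignment so that the block forced to carry no padding is the one adjacent to $w'$ (harmless, since $p_n\geq 0$ is all that is needed there), while the pad blocks $\pad^{m_0}\embeds\pad^{p_0}$ at the other end are compared against each other. Second, and crucially, the second component is suffixed with $\left[\begin{smallmatrix}\varepsilon\\\pad\end{smallmatrix}\right]^{\ks}$, an \emph{unbounded} pad sink that lets one guess $p_0$ large enough to absorb all the accumulated slack. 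This suffix destroys bounded length discrepancy, so one cannot conclude ``bld hence synchronous'' as you do; instead the paper invokes the definition of synchronous relations directly (a bld relation, resynchronized, followed by a regular language appended to one component). You should rework your construction along these lines.
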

\begin{proof}
  Let $\tup{R,w,w'}$ be an instance of $(k+2)$-LR[Sync] with $w\neq w'$ and
  let $\$$ and $\pad$ be two fresh symbols.  Because
  $R\cdot\{(\$,\$)\}$ is a synchronous relation, we can construct a padded
  length-preserving relation
  \begin{align}\label{eq-rel-regep}
    R_\pad&\eqdef\{(u\$\pad^m,v\$\pad^p)\mid m,p\geq 0\wedge(u,v)\in
    R\wedge |u\$\pad^m|=|v\$\pad^p|\}
    \shortintertext{and define a relation similar to \eqref{eq-rel-ratep}:}
    R_\pad'&\eqdef  \begin{bmatrix} \$w'\$ \\ \$ \end{bmatrix}  \cdot R_\pad^\kp \cdot
      \begin{bmatrix} \varepsilon \\ w\$ \end{bmatrix}   \cdot 
       \begin{bmatrix} \varepsilon \\ \pad \end{bmatrix}^{\!\ast}\;.
  \end{align}

  Let us show that $R_\pad'$ is regular: $\{(\$w'\$,\$)\}$ and
  $\{(\varepsilon,w\$)\}$ are relations with bounded length discrepancy
  and $R_\pad^\ast$ is length preserving, thus their concatenation has
  bounded length discrepancy, and can be effectively computed by
  \emph{resynchronization} \citep{saka}.  Suffixing
  $\{(\varepsilon,\pad)\}^\ast$ thus yields a %
  synchronous relation.
  
  As in the proof of Prop.~\ref{prop-ratep}, $R_\pad'$ preserves the
  $\$$ separators, thus if $(u,v)$ belongs to $R'_\pad$, then we can write
  \begin{equation}\label{eq-uv-regep}
    \begin{array}{rlcclcclcccccl}
    u&=\$&u_n&\$&\pad^{m_n}&u_{n-1}&\$&\pad^{m_{n-1}}&\cdots&\$&\pad^{m_1}&u_0&\$&\pad^{m_0}\;,\\
    v&=\$&v_n&\$&\pad^{p_n}&v_{n-1}&\$&\pad^{p_{n-1}}&\cdots&\$&\pad^{p_1}&v_0&\$&\pad^{p_0}\;.
    \end{array}
  \end{equation}
  with $n>0$ and $m_n=0$.  Furthermore, $v_0=w$, $u_n=w'$, and
  $(u_i\$\pad^{m_i},v_{i+1}\$\pad^{p_{i+1}})$ belongs to $R_\pad$,
  thus $u_i\mathbin{R}v_{i+1}$ for all $i$.  If the EP instance is
  positive, i.e.\ if $u\embeds v$, then $u_i\embeds v_i$ and $m_i\leq
  p_i$ for all $i$, and we can build a sequence of form
  \eqref{eq-seq-ratep} proving the LR instance to be positive.
  Conversely, if the LR instance is positive, there exists a sequence
  of form \eqref{eq-seq-ratep}, and we can construct a pair $(u,v)$ as
  in \eqref{eq-uv-regep} above by guessing a sufficient padding amount
  $p_0$ that will allow to carry the entire rewriting.
\begin{vlong}
\par
\end{vlong}
  Finally, as in the proof of Prop.~\ref{prop-ratep}, we can set
  $\$\eqdef\sep$.
\end{proof}

\paragraph{Lossy Termination.}  In contrast with the previous cases,
our hardness proof for the LT problem does not reduce from LR but
directly from a semi-Thue word problem, by adapting the proof of
Prop.~\ref{prop-lrr} in such a way that $R_\embedd^\trc$
is \emph{guaranteed} to terminate.  The main difference is that we
reduce from a semi-Thue system where the length of \emph{derivations}
is bounded, rather than the length of configurations---this is still
$\F_{\omega^k}$-hard since the distinction between time and space
complexities is insignificant at such high complexities.  The
simulation of such a system then builds two copies of the initial
budget in Phase~\ref{c-forward}: a \emph{space} budget, where the
derivation simulation takes place, and a \emph{time} budget, which
gets decremented with each new rewrite of Phase~\ref{c-simul}, and
enforces its termination even in case of losses.  See App.~\ref{ax-term}
for details.
\begin{proposition}\label{prop-term}
  Let $k\geq 2$.  Then $(k+2)$-LT[1-bld] is $\F_{\omega^k}$-hard.
\end{proposition}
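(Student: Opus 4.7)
The plan is to reduce from an $F_{\omega^k}$-bounded variant of the semi-Thue word problem in which \emph{derivations} (not configurations) are bounded in length by $F_{\omega^k}(n)$. By the equivalence of time and space at these very high complexity levels, this variant is still $\F_{\omega^k}$-hard, already over a binary alphabet.

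I adapt the construction from the proof of Prop.~\ref{prop-lrr} by building \emph{two} parallel Hardy budgets in Phase~\ref{c-forward}: a \emph{space} budget of size $F_{\omega^k}(n)$ on which the $\Upsilon$-derivation is simulated (as before), and a \emph{time} budget of size $F_{\omega^k}(n)$ bounding the number of simulation steps. Both are produced by successive applications of $\Fw$ to the starting Hardy configuration $a_{k-1}^n\tally\sep\tally^n$, stored in extra ``$\sep$''-separated segments of $\Seqs$. The relation $\Sim$ is modified so that each step of $\to_\Upsilon$ also consumes one $\tally$ from the time budget, and $\Bw$ is then applied to both budgets in Phase~\ref{c-backward}. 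To obtain a non-terminating lossy run in the positive case, a length-preserving loop-back transition $(w',w)$ is added, so that a full cycle Forward--Simulate--Backward can repeat indefinitely.

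Correctness proceeds in two directions. If $\tup{\Upsilon,y,y'}$ admits a derivation of length $\leq F_{\omega^k}(n)$, we faithfully perform one cycle, reach $w'$, take the loop-back, and iterate forever, so the LT instance answers ``no''. Conversely, if no such derivation exists, then every run of $\Sim$ is forced to halt by the time counter before reaching $y'$; $\Fin$ cannot fire, the backward phase and the loop-back cannot be triggered, and the termination of $\Fw$ and $\Bw$ (Lem.~\ref{lem-corr}) together with the decrementing time counter ensures that every computation eventually halts, so the LT instance answers ``yes''. Since $\F_{\omega^k}$ is closed under complement, this reduction from the complement of semi-Thue reachability entails $\F_{\omega^k}$-hardness of LT itself.

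The main obstacle is ruling out spurious non-termination caused by losses: a lossy simulation could try to cheat its way to $y'$ and thus undesirably trigger the loop-back. Robustness of the Hardy encoding (Lem.~\ref{lem-robust}), combined with the budget verification performed by $\Bw$, ensures any such shortcut is detectable---the final tally count after $\Bw$ is then strictly smaller than the initial one, preventing the reconstruction of $w$ and hence the loop-back. The time budget itself can only be shortened by losses, never lengthened, so the bound $F_{\omega^k}(n)$ on simulation length is preserved in the lossy setting.
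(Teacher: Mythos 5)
Your proposal matches the paper's proof in all essentials: a reduction from derivation-length-bounded semi-Thue reachability, a second (time) budget decremented by each simulation step so that the lossy simulation phase must halt, termination of $\Fw$ and $\Bw$, a non-termination gadget that can only be triggered after a successful backward verification of the space budget, and closure of $\F_{\omega^k}$ under complement to conclude hardness. The only differences are cosmetic or cautionary: the paper realizes the gadget as a dedicated state $q_{\mathsf{End}}$ with an identity self-loop rather than your loop-back $(w',w)$, and it carries the time tally through $\Bw$ \emph{unchanged} rather than ``applying $\Bw$ to both budgets''---since the time budget has been legitimately decremented during simulation, it must be exempted from the backward Hardy verification and ignored by the gadget's trigger, or the positive direction of your reduction would fail.
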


\paragraph{Lossy Channel Systems.}
By over-approximating the behaviours of a channel system by allowing
uncontrolled, arbitrary message losses, Abdulla, C\'ec\'e, et
al. \citep{cece95,abdulla96b} obtain decidability results on an
otherwise Turing-complete model.  Many variants of this model have
been studied in the literature
\citep{pepreg,CS-concur08,jks-ifiptcs12}, but our interest here is
that LCSs were originally used as the formal model for the
$\F_{\omega^\omega}$ lower bound proof of \citet{lcs}, rather than a
PEP.

Formally, a \emph{lossy channel system} (LCS) is a finite labeled
transition system $\tup{Q,\Sigma,\delta}$ where transitions in
$\delta\subseteq Q\times
\{?,!\}\times \Sigma\times Q$ read and write on an unbounded channel.
An channel system defines an infinite transition system over its set of
configurations $Q\times\Sigma^\ast$---holding the current state and
channel content---, with transition relation $q,x\to q',x'$ if either
$\delta$ holds a read $(q,{?}m,q')$ and $x=mx'$, or if it holds a
write $(q,{!}m,q')$ and $xm=x'$.  The operational semantics of
an LCS then use the lossy version $\to_\embedd$ of this transition
relation.  In the following, we consider a slightly extended model,
where transitions carry sequences of instructions instead, i.e.\
$\delta$ is a finite set included in
$Q\times(\{{?},{!}\}\times\Sigma)^\ast\times Q$.  The natural decision
problem associated with a LCS is its reachability problem:
\begin{decpb}{Lossy Channel System Reachability}{LCS}
  \item[input] A LCS $\?C$ and two configurations $(q,x)$ and
    $(q',x')$ of $\?C$.
  \item[question] Is $(q',x')$ reachable from $(q,x)$ in $\?C$, i.e.\ does
    $q,x\to^\trc_\embedd q',x'$?
\end{decpb}

The lossy rewriting problem easily reduces to a reachability problem
in a LCS: the LCS \emph{cycles} through the channel contents thanks to
a distinguished symbol, and applies the rational relation at each
cycle; see App.~\ref{ax-lcs} for details.
\begin{proposition}\label{prop-lcs}
  Let $k\geq 2$.  Then $(k+2)$-LCS is $\F_{\omega^k}$-hard.
\end{proposition}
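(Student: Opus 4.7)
My plan is to reduce from $(k+2)$-LR[1-bld], which is $\F_{\omega^k}$-hard by Prop.~\ref{prop-lrr}. Given an instance $\tup{R,w,w'}$ of $(k+2)$-LR[1-bld] over an alphabet $\Sigma$ of size $k+2$ containing ``$\sep$'', I construct a LCS $\?C$ that cycles the channel contents through a transducer for $R$, using ``$\sep$'' (already in $\Sigma$) as the boundary marker between the pre-image and image of each $R$-step. The crux is that the native message-loss semantics of a LCS already realises the $\embedd$-compositions in the definition $R_\embedd = {\embedd}\comp R\comp{\embedd}$; iterating an $R$-applying cycle in $\?C$ therefore naturally simulates $R_\embedd^\trc$, and since we reuse a symbol of $\Sigma$ the alphabet of $\?C$ stays of size $k+2$.

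Concretely, from a normalised transducer $\?T = \tup{Q_T,\Sigma,\Sigma,\delta_T,I_T,F_T}$ for $R$, I let $\?C$ have state set $Q_T \cup \{q_{\mathrm{cyc}},q_f\}$ and take the following transitions: an empty-instruction move from $q_{\mathrm{cyc}}$ to each $p \in I_T$; a read $p \step{?a} p'$ for each $(p,(a,\varepsilon),p') \in \delta_T$ and a write $p \step{!b} p'$ for each $(p,(\varepsilon,b),p') \in \delta_T$; from each $p \in F_T$ a compound transition $p \step{?\sep\cdot!\sep} q_{\mathrm{cyc}}$ that closes a cycle by consuming the old separator and appending a fresh one after the written image; and, finally, writing $w' = b_1\cdots b_m$, a compound check transition $q_{\mathrm{cyc}} \step{?b_1\cdots?b_m\cdot?\sep} q_f$. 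I then take source configuration $(q_{\mathrm{cyc}},w\sep)$ and target $(q_f,\varepsilon)$. The construction uses the extended LCS model of the paper and is visibly polynomial in the size of the LR instance.

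For correctness, the forward direction is a direct simulation: any $R_\embedd^\trc$-derivation $w = y_0 \mathbin{R_\embedd} y_1 \mathbin{R_\embedd} \cdots \mathbin{R_\embedd} y_n = w'$ can be played as $n$ consecutive cycles of $\?C$, with the free channel losses implementing the $\embedd$ pre- and post-compositions around each $R$-step, followed by the final check that consumes $w'\sep$ from the front. For the backward direction, the single symbol ``$\sep$'' cannot be recreated once lost, so any accepting run must traverse some $n \geq 0$ complete transducer cycles, each of which implements one $R_\embedd$-step on the pre-separator content; hence the content $y_n$ seen just before the atomic check satisfies $w \mathbin{R_\embedd^n} y_n$ together with $w' \embeds y_n$. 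The main technical obstacle I anticipate is closing this last embedding gap: to pass from ``some $y_n$ with $w' \embeds y_n$ is reached'' to ``$w' \in R_\embedd^\trc(w)$'', I will invoke the downward closure of $R_\embedd$ in its second argument (immediate from the trailing $\embedd$ in its definition), so that when $n \geq 1$, $y_n \embedd w'$ already implies $w \mathbin{R_\embedd^n} w'$. The degenerate case $n = 0$ would require $w' \embeds w$, which is ruled out by the specific LR instances produced in the proof of Prop.~\ref{prop-lrr}, because there $w$ and $w'$ differ precisely in the same-length binary encodings of the control states $q_\Fw$ and $q_\Bw$, which do not embed into one another. This establishes the $\F_{\omega^k}$-hardness of $(k+2)$-LCS.
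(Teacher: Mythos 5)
Your construction is essentially the paper's own proof: both cycle the channel contents through the normalized transducer for $R$, reusing the separator ``$\sep$'' as the cycle marker so that the alphabet stays at size $k+2$, and both rely on the specific instances from Prop.~\ref{prop-lrr}. The only cosmetic difference is that you discharge the final comparison with $w'$ via an explicit compound read transition into a target configuration $(q_f,\varepsilon)$, whereas the paper targets the exact configuration $(q_f,w'\sep)$; your explicit treatment of the zero-cycle case is a welcome detail the paper leaves implicit.
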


\section{Concluding Remarks}\label{sec-concl}
Post embedding problems provide a high-level packaging of
hyper-Ackermannian decision problems---and thanks to our parametric
bounds, for $k$-Ackermannian problems---, compared to e.g.\
reachability in lossy channel systems (as used in~\citep{lcs}).  The
lossy rewriting problem is a prominent example: because it is stated
in terms of a rational relation instead of a machine definition, it
beneficiates automatically from the theoretical toolkit and multiple
characterizations associated with rational relations.  For a
simple example,
the \emph{increasing} rewriting problem, which employs
$R_\embeds\eqdef{\embeds}\comp R\comp{\embeds}$ instead of
$R_\embedd$, is immediately seen to be equivalent to LR, by
substituting $R^{-1}$ for $R$ and exchanging $w$ and $w'$.

Interestingly, this inversion trick allows to show the equivalence of
the lossy and increasing variants of all our problems, except for
lossy termination:
\begin{decpb}{Increasing Termination}{IT[Rat]}
\item[input] A rational relation $R$ over $\Sigma$ and a word $w$ in
$\Sigma^\ks$.
\item[question] Does $R_\embeds^\trc$ terminate from $w$?
\end{decpb}
A related problem, termination of increasing channel systems with
emptiness tests, is known to be in $\F_3$~\citep{BMOSW} instead of
$\F_{\omega^\omega}$ for LCS termination, but IT[Rat] is more
involved.  Like LR[Rat] or EP[Rat], it provides a high-level
description, this time of \emph{fair termination} problems in
increasing channel systems, which are known to be equivalent to
satisfiability of \emph{safety metric temporal
logic}~\citep{mtl,safeMTL,jenkins}.  The exact complexity of IT[Rat]
is open, with a gigantic gap between the $\F_{\omega^\omega}$ upper
bound provided by WSTS theory, and an $\F_4$ lower bound by
\citet{jenkins}.

\bibliography{journals,conferences,pep}
\clearpage
\appendix
\def\rightmark{Appendices}
\setcounter{page}{1}\renewcommand\thepage{\roman{page}}
\begin{vshort}
\noindent The following appendices contain material that will not be part of the
final paper, if accepted.
\end{vshort}

\section{Unary Alphabet: Prop.~\ref{prop-1ep}}\label{ax-1ep}

We prove here Prop.~\ref{prop-1ep}: $1$-EP[Rat] is in
\textsc{NLogSpace}.

\paragraph{Parikh Images.}  The proof employs the semilinear view of
unary rational relations: a \emph{semilinear set} $S$ is a subset of
$\+Z^k$ described by a finite union of \emph{linear sets}
$(\vec{b},\vec{P})$ defined as $\{\vec{b}+\sum_{i=1}^m
x_i\vec{p}_i\mid x_1,\dots,x_m\in\+N\}$ where $\vec{b}$ is a base in
$\+Z^k$ and $\vec{P}$ is a set of $m$ periods
$\vec{p}_1,\dots,\vec{p}_m$, each in $\+Z^k$.  It is well known that
the \emph{Parikh image} (aka commutative image) $\Psi(L)$ of a regular
language $L$ over an alphabet $\Sigma$ is a semilinear set in
$\+N^{|\Sigma|}$ telling for each symbol of $\Sigma$ how many times it
occurs in a word of $L$.  Formally, let $\Sigma=\{a_1,\dots,a_n\}$,
then a vector $\vec{u}$ is in $\Psi(L)$ iff there exists a word $u$ in
$L$ s.t.\ for all $1\leq i\leq n$, $\vec{u}(i)=|u|_{a_i}$ the number
of occurrences of $a_i$ in $u$.

\begin{proof}[Proof of Prop.~\ref{prop-1ep}]
Given a rational relation $R$ over the unary alphabet $\Sigma=\{a\}$,
we can view its normalized transducer
$\?T=\tup{Q,\Sigma,\Sigma,\delta,I,F}$ as a nondeterministic finite
automaton $\?A=\tup{Q,\Delta,\delta,I,F}$ over the two-letters
alphabet $\Delta=\{(a,\varepsilon),(\varepsilon,a)\}$.  The Parikh
image of $L(\?A)$ is then a semilinear set
$S\subseteq\+N^2$ verifying
\begin{equation}
 S=\{(m,n)\mid (a^m,a^n)\in R\}\;.
\end{equation}

Assume $R\cap{\embeds}\neq\emptyset$, i.e.\
there exists a pair $(m,n)$ in $S$ with $m\leq n$.  Then, there exists
a linear set $(\vec{b},\vec{P})$ in $S$ s.t.\ either
$\vec{b}=(b_1,b_2)$ with $b_1\leq b_2$, or $b_1>b_2$ but there exists
a period $\vec{p}=(p_1,p_2)$ in $\vec{P}$ verifying $p_1<p_2$---and
then there exists $x$ in $\+N$ s.t.\ $b_1+x_1p_1\leq b_2+xp_2$.

It therefore suffices to check in \textsc{NLogSpace} for the existence
of such a non-decreasing basis $\vec{b}$ or such an increasing period
$\vec{p}$ in the normalized transducer $\?T$ for $R$.  This is rather
straightforward:
\begin{itemize}
\item a basis $\vec{b}$ is read along a simple accepting
  run in $\?T$, hence a run of length at most $|Q|$, while
\item a period $\vec{p}$ is read along a simple loop on some state $q$
  of $Q$; we have to check that $q$ is both accessible and
  co-accessible, thus $q$ should lie on an accepting run of length at
  most $2|Q|$ and exhibit a loop of length at most $|Q|$.
\end{itemize}
In both cases it suffices to guess a suitable accepting run to find
such a $\vec{b}$ or $\vec{p}$.
\end{proof}

\section{Codes and Hardy Computations}

\subsection{Pure Codes: Lem.~\ref{lem-pure}}\label{app-pure}
\begin{proof}[Proof of Lem.~\ref{lem-pure}]
  Remember that ordinals are supplied with a \emph{left
  subtraction} operation, because if $\beta\leq \beta'$, then
  $\beta'-\beta$ can be defined as the unique ordinal verifying
  $\beta+(\beta'-\beta)=\beta'$.

  We define an inverse $\pi^{-1}$ by induction on the CNF of ordinals;
  this function yields pure codes exclusively:
  \begin{align}
    \pi^{-1}(0)&\eqdef \varepsilon\;,
  & \pi^{-1}\left(\sum_{i=1}^p\omega^{\beta_i}\right)&\eqdef \beta^{-1}(\beta_p)\tally\pi^{-1}\left(\sum_{i=1}^{p-1}\omega^{\beta_i-\beta_p}\right)\,.
  \qedhere
  \end{align}
\end{proof}

\subsection{Computing with Rational Relations:
  Lem.~\ref{lem-corr}}\label{ax-rules}

\paragraph{Forward and Backward Rules.}  We present here the two
relations $\Fw$ and $\Bw$ under the understanding that they are
suitably restricted to sequences in $\Seqs$.  The relations below are
rational and even 1-bld.  It suffices to give the relations for $\Fw$,
as $\Bw$ just reverses their direction and uses states $q_\Bw$,
$q_{\Bw_1}$, and $q_{\Bw_2}$ instead of $q_\Fw$, $q_{\Fw_1}$, and
$q_{\Fw_2}$.  For $R_0$ given in \eqref{eq-rz0},
\begin{align}
  q_\Fw\sep\sep\tally x\sep\tally^n
  &\;\mathbin{\Fw_0}\;q_\Fw\sep\sep x\sep\tally^{n+1}\label{eq-rFw0}
  \intertext{for all $n$ in $\+N$ and $x$ in $\Sigma_{k\tally}^\ks$.
  Let us repeat the rules for $R_1$ given in
  (\ref{eq-rFw10}--\ref{eq-rFw13}):}
  q_\Fw\sep \sep wa_0\tally x\sep \tally^{n+2}
  &\;\mathbin{\Fw_1}\;q_{\Fw_1}\sep
  w\tally\sep\purify(a_0x)\sep\tally^{n+1}a_0\tag{\ref{eq-rFw10}}\\
  q_{\Fw_1}\sep w\tally^m\sep x\sep\tally^{n+1}a_0^{p+1}
  &\;\mathbin{\Fw_1}\;q_{\Fw_1}\sep w\tally^{m+1}\sep x\sep\tally^{n}a_0^{p+2}
  \tag{\ref{eq-rFw11}}\\
  q_{\Fw_1}\sep w\tally^{m+1}\sep x\sep a_0^{n+2}
  &\;\mathbin{\Fw_1}\;q_{\Fw_1}\sep\sep w\tally^{m+1} x\sep\tally^{n+2}
  \tag{\ref{eq-rFw13}}
  \intertext{where $m,n,p$ range over $\+N$, $w$ over $\Sigma_k^\ks$,
  and $x$ over $\Sigma_{k\tally}^\ks$.  For $R_2$ defined in \eqref{eq-rz2},} 
  q_\Fw\sep \sep wa_i\tally x\sep \tally^{n+2}
  &\;\mathbin{\Fw_2}\;q_{\Fw_2}\sep
  wa_{i-1}\sep\purify(a_ix)\sep\tally^{n+1}a_0\label{eq-rFw20}\\
  q_{\Fw_2}\sep wa_{i-1}^{m}\sep x\sep\tally^{n+1}a_0^{p+1}
  &\;\mathbin{\Fw_2}\;q_{\Fw_2}\sep wa_{i-1}^{m+1}\sep x\sep\tally^{n}a_0^{p+2}
  \label{eq-rFw21}\\
  q_{\Fw_2}\sep wa_{i-1}^{m+1}\sep x\sep a_0^{n+2}
  &\;\mathbin{\Fw_2}\;q_{\Fw_2}\sep\sep wa_{i-1}^{m+1}\tally x\sep\tally^{n+2}
  \label{eq-rFw23}
\end{align}
for $i>0$, $m,n,p$ in $\+N$, $w$ in $\Sigma_k^\ast$, and
$x$ in $\Sigma_{k\tally}^\ks$.%

We define $\Fw = \Fw_0 \cup \Fw_1 \cup \Fw_2$ and analogously for $\Bw$.
The first thing to check is that the reflexive transitive closures of
$\Fw$ and $\Bw$ implement those of $R_H$ and $R_H^{-1}$ as advertised.  A
helpful notion is that of a \emph{phase} of a state $q$, which is a
sequence of rewrites of form
\begin{equation}  
(q_R\sep\sep c_0)\mathbin{R}(q\sep x_1\sep
  c_1)\mathbin{R}\cdots\mathbin{R}(q\sep x_m\sep
  c_{m})\mathbin{R}(q_R\sep\sep c_{m+1})
\end{equation}
for some $c_i$s in $\mathrm{Confs}$ and $x_i$s in
$\Sigma_{k\tally}^\ks$, where $R$ is $\Fw$ or $\Bw$ and thus $q_R$ is
the corresponding state $q_\Fw$ or $q_\Bw$, and
$q$ is an \emph{intermediate} state among
$\{q_{\Fw_1},q_{\Fw_2},q_{\Bw_1},q_{\Bw_2}\}$.  The idea is that a
phase ought to simulate exactly the effect of a single step
$c_0\mathbin{R_H}c_m$ or $c_0\mathbin{R_H^{-1}}c_m$.
\begin{lemma}[Correctness of $\Fw$ and $\Bw$]\label{lem-FwBwcor}
  Let $j$ be in $\{0,1,2\}$ and $c,c'$ be in $\mathrm{Confs}$.  Then
  $(q_\Fw\sep\sep c)\mathbin{\Fw_j^\trc}(q_\Fw\sep\sep c')$ iff $(q_\Bw\sep\sep c')\mathbin{\Bw_j^\trc}(q_\Bw\sep\sep c)$ iff $c \mathbin{R_j^\trc} c'$.
\end{lemma}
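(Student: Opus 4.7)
The plan is to establish the biconditional chain by a phase-by-phase simulation analysis, and then to invoke the forward--backward symmetry of the rule sets to obtain the $\Bw_j$ equivalence for free. Fix $j\in\{0,1,2\}$ and configurations $c,c'\in\mathrm{Confs}$. For $j=0$ there is nothing to do: rule \eqref{eq-rFw0} is just \eqref{eq-rz0} with the state decoration $q_\Fw\sep\sep$ prepended on both sides, so one $\Fw_0$-step corresponds to exactly one $R_0$-step and the reflexive transitive closure is immediate.

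For $j\in\{1,2\}$, the central ingredient is a \emph{phase lemma}: any nonempty sequence of $\Fw_j$-steps starting at state $q_\Fw$ and returning to $q_\Fw$ decomposes into exactly one opening step (\eqref{eq-rFw10} or \eqref{eq-rFw20}), $n+1$ transfer steps (\eqref{eq-rFw11} or \eqref{eq-rFw21}), and one closing step (\eqref{eq-rFw13} or \eqref{eq-rFw23}), whose net effect on the $\mathrm{Confs}$-component is precisely one $R_j$-step. I would prove this by direct inspection of the rules: once the intermediate state $q_{\Fw_j}$ is entered, no opening rule applies; the transfer rule requires a counter of the form $\tally^{n+1}a_0^{p+1}$ while the closing rule requires $a_0^{n+2}$, so the two are mutually exclusive and the only way to return to $q_\Fw$ is to exhaust the $\tally$'s via transfers before firing the closing rule. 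Tracking the invariant $w\tally^{1+k}\sep x\sep\tally^{n+1-k}a_0^{k+1}$ (respectively $wa_{i-1}^{1+k}\sep x\sep\tally^{n+1-k}a_0^{k+1}$) for $k=0,\dots,n+1$, I would then read off on the $\mathrm{Confs}$-side exactly the transitions $wa_0\tally x\sep\tally^{n+2}\mapsto w\tally^{n+2}\purify(a_0x)\sep\tally^{n+2}$ and $wa_i\tally x\sep\tally^{n+2}\mapsto wa_{i-1}^{n+2}\tally\,\purify(a_ix)\sep\tally^{n+2}$, i.e.\ instances of \eqref{eq-rz1} and \eqref{eq-rz2}.

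Given the phase lemma, both implications of $(q_\Fw\sep\sep c)\mathbin{\Fw_j^\trc}(q_\Fw\sep\sep c')\iff c\mathbin{R_j^\trc}c'$ follow by induction on chain length: each $R_j$-step is realized by a single phase, and conversely any $\Fw_j^\trc$-run between $q_\Fw$-marked configurations splits uniquely into phases. Since the $\Bw_j$-rules are by construction the $\Fw_j$-rules read right-to-left with $q_\Bw,q_{\Bw_j}$ substituted for $q_\Fw,q_{\Fw_j}$, the equivalence $(q_\Bw\sep\sep c')\mathbin{\Bw_j^\trc}(q_\Bw\sep\sep c)\iff(q_\Fw\sep\sep c)\mathbin{\Fw_j^\trc}(q_\Fw\sep\sep c')$ falls out by reversing each phase. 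The main obstacle will be the phase lemma itself, and specifically ruling out ``stuck'' derivations in an intermediate state: one must check that as long as the counter carries at least one $\tally$ the transfer rule is the unique enabled rule, and that its firing preserves the counter-format invariant so that the closing rule eventually becomes applicable. Everything else reduces to careful bookkeeping of the indices $m,n,p$ across the phase, together with a verification that concatenating the accumulated working buffer with the (already purified) ordinal segment reproduces the code prescribed by \eqref{eq-rz1}/\eqref{eq-rz2}.
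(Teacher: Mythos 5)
Your proposal is correct and follows essentially the same route as the paper: the paper reduces the $\Bw_j$ case to $\Fw_j$ by the inverse-with-renamed-states symmetry, handles $j=0$ as immediate, and for $j=1$ exhibits the explicit opening/transfer/closing rewrite sequence and argues conversely that any phase of $q_{\Fw_1}$ must have exactly this shape because the closing rule cannot fire until the transfer rule has exhausted the $\tally$'s introduced alongside the $a_0$ planted by the opening rule. Your ``phase lemma'' is just a slightly more explicit packaging of the same argument (the paper defines the notion of phase immediately before the lemma), with the same counter-format invariant doing the work.
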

\begin{proof}
  The proof is conducted by a case analysis.  Because $\Bw$ is the
  inverse of $\Fw$ with substituted state names, it suffices to show
  the equivalence of $\mbox{$(q_\Fw\sep\sep
    c)$}\mathbin{\Fw_j^\trc}(q_\Fw\sep\sep c')$ with $c \mathbin{R_j^\trc}
  c'$. For $j=0$, the correctness of
  $\Fw_0=\{(q_\Fw{\sep\sep},q_\Fw{\sep\sep})\}\cdot R_0$ is immediate.

  For $j=1$, it suffices to consider a single step of $R_1$, i.e.\ a
  pair of $c=wa_0\tally x\sep\tally^{n+2}$ and
  $c'=w\tally^{n+2}\purify(a_0x)\sep\tally^{n+2}$ with $n$ in $\+N$, $w$ in
  $\Sigma_k^\ks$, and $x$ in $\Sigma_{k\tally}^\ks$.  Then we have a
  rewrite sequence
  \begin{align*}
    (q_\Fw\sep\sep c)
    &\mathbin{\Fw_1}(q_{\Fw_1}\sep w\tally\sep\purify(a_0x)\sep\tally^{n+1}a_0)
    \tag{by \eqref{eq-rFw10}}\\
    &\mathbin{\Fw_1^{n+1}}(q_{\Fw_1}\sep w\tally^{n+2}\sep\purify(a_0x)\sep a_0^{n+2})
    \tag{by \eqref{eq-rFw11}}\\
    &\mathbin{\Fw_1}(q_{\Fw}\sep\sep w\tally^{n+2}\purify(a_0x)\sep\tally^{n+2})
    \tag{by \eqref{eq-rFw13}}\\
    &=(q_{\Fw}\sep\sep c')\;.
  \end{align*}
  Conversely, it suffices to consider a phase of $q_{\Fw_1}$.  It
  is necessarily of the form above, because for \eqref{eq-rFw13} to be
  applicable, the counter segment must be in $\{\tally\}^\ks$, but the
  first step \eqref{eq-rFw10} puts an $a_0$ at the end of the
  segment.  Thus $\Fw_1$ has to go through the appropriate number of
  applications of \eqref{eq-rFw11}.  Therefore, a phase of
  $q_{\Fw_1}$ implies a rewrite of $R_1$.

  We leave the case of $j=2$ as an exercise for the reader, as it is
  very similar to that of $j=1$.
\end{proof}

\subsubsection{Proof of Lem.~\ref{lem-corr}}  The lemma contains
several statements.  The fact that $\Fw$ and $\Bw$ are rational 1-bld
is by definition.  That they are terminating is because they check
that their counters are larger than $1$ in limit steps.
Regarding weak implementation, thanks to
Lem.~\ref{lem-robust} and Lem.~\ref{lem-crct}, we know that
computations using $R_H$ are weak implementations in the sense of
Lem.~\ref{lem-corr}.  Therefore, it remains to prove that the small steps
defined for $\Fw$ and $\Bw$ (i)~correctly implement the rules of
$R_H$ and (ii)~are ``robust'' to losses.

Point~(i) was the topic of Lem.~\ref{lem-FwBwcor}, which in
combination with Lem.~\ref{lem-crct} proves the existence of rewrites 
\begin{align}
  (q_\Fw\sep\sep\pi^{-1}(\alpha)\sep\tally^n)\:&\Fw_\embedd^\trc\:(q_\Fw\sep\sep\sep\tally^m)\label{eq-Fw}\\
  (q_\Bw\sep\sep\sep\tally^m)\:&\Bw_\embedd^\trc\:(q_\Bw\sep\sep\pi^{-1}(\alpha)\sep\tally^n)\label{eq-Bw}
\end{align}
with $m=H^\alpha(n)$.

Turning to point~(ii), in order to prove that a rewrite of
form \eqref{eq-Fw} implies $m\leq H^\alpha(n)$, we want to transform
it into a rewrite according to $(R_H)_\embedd^\trc$, which is known to
imply $m\leq H^\alpha(n)$ thanks to Lem.~\ref{lem-robust} and
Lem.~\ref{lem-crct}.  We conduct a similar proof later
for \eqref{eq-Bw} with $(R_H^{-1})_\embedd^\trc$,
proving \eqref{eq-Bw} to imply $m\geq H^\alpha(n)$.
\begin{claim}
  If $(q_\Fw\sep\sep c)\mathbin{\Fw_\embedd^\trc}(q_\Fw\sep\sep c')$, then
  $c\mathbin{(R_H)_\embedd^\trc}c'$.
\end{claim}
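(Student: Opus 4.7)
The approach is to decompose the given $\Fw_\embedd^\trc$-sequence into \emph{phases} delimited by consecutive configurations of form $(q_\Fw\sep\sep c)$ with $c\in\mathrm{Confs}$, and to show that each single phase maps to one $(R_H)_\embedd$-step. Chaining these through the reflexive transitive closure then yields the claim. The decomposition is well-defined because the state segment has constant size $\lceil\log|Q|\rceil$ and the separators $\sep$ are reliable: losing a symbol in either would destroy the $\Seqs$ pattern and block every remaining $\Fw$ rule. Hence states and separators persist throughout the whole computation, and each phase begins with one of~\eqref{eq-rFw0}, \eqref{eq-rFw10} or \eqref{eq-rFw20} and ends, respectively, with that same single step, with~\eqref{eq-rFw13}, or with~\eqref{eq-rFw23}.

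I would dispatch the $\Fw_0$-phase immediately by unfolding $\Fw_\embedd$ and invoking $\pure$ to land both endpoints in $\mathrm{Confs}$ (using that $\pure(y)\embeds y$ and $\pi\circ\pure=\pi$, so pureification preserves the $R_0$ step). For a $\Fw_1$-phase (the $\Fw_2$-case being analogous), let $\bar c$ be the ordinal-and-counter content immediately before the initial~\eqref{eq-rFw10} fires, of shape $wa_0\tally x\sep\tally^{n+2}$; take $\bar c_0\eqdef\pure(wa_0\tally x)\sep\tally^{n+2}\in\mathrm{Confs}$ and let $\bar c_1\in\mathrm{Confs}$ be its $R_1$-image. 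Then $c\embedd\bar c_0$ and $\bar c_0\mathbin{R_H}\bar c_1$, so the remaining task is $\bar c_1\embedd c'$; this then witnesses $c\mathbin{(R_H)_\embedd}c'$.

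The main obstacle is this last embedding: intermediate losses may occur at every step~\eqref{eq-rFw11} in any of the working, ordinal, or counter segments, and these segments interact nontrivially. The plan is to maintain by induction along the phase a loop invariant stating that, after each small step, the current configuration is a subword (in the $\embeds$ sense) of the configuration reached by the corresponding loss-free execution started from $\bar c_0$. Step~\eqref{eq-rFw11} is monotone under $\embeds$---it transfers one $\tally$ from the counter's leading $\tally$-block to the working segment and appends an $a_0$ to the counter's trailing $a_0$-block---and each $\Fw_\embedd$ wrapping only adds further losses, so the invariant propagates through every intermediate small step. The terminating step~\eqref{eq-rFw13} concatenates the working segment onto the ordinal segment while resetting the counter, still preserving the invariant, and monotonicity of $\pure$ under $\embeds$ ($y\embeds y'$ implies $\pure(y)\embeds\pure(y')$) then delivers the desired $\bar c_1\embedd c'$.
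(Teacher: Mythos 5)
Your overall strategy---cutting the run into phases delimited by $q_\Fw$-configurations and showing each lossy phase is dominated by one reliable $R_H$-step---is the same as the paper's, which likewise isolates ``lossy phases'' of shape \eqref{eq-lphase}. The difference is in how a lossy phase is converted into a reliable one: you run a loss-free shadow computation from $\bar c_0$ and maintain a step-indexed domination invariant, whereas the paper commutes each atomic loss $\atoml$ rightward past the reliable small steps (the inclusions \eqref{eq-commutation}), accumulating all losses into a single $\embedd$ at the end of the phase. These are two phrasings of the same induction, and the inductive step for \eqref{eq-rFw11} goes through in both.

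The gap is in your treatment of the terminating step \eqref{eq-rFw13}. Your invariant compares the lossy configuration after $j$ small steps with the shadow configuration after the \emph{same} $j$ steps. But losing a $\tally$ in the counter segment makes the guard of \eqref{eq-rFw13} (counter entirely in $a_0^+$) become true in the lossy run strictly \emph{before} it does in the shadow: the lossy phase then fires \eqref{eq-rFw13} while the ``corresponding'' shadow configuration still has $\tally$'s in its counter and cannot fire \eqref{eq-rFw13} at all, so there is no corresponding shadow step for the invariant to be preserved against. This is precisely the case the paper flags as non-trivial: its second inclusion in \eqref{eq-commutation} is $({\atoml}\comp\Fw_{\ref{eq-rFw13}})\subseteq(\Fw_{\ref{eq-rFw11}}\comp\Fw_{\ref{eq-rFw13}}\comp{\embedd})$, i.e., the reliable run must insert \emph{extra} applications of \eqref{eq-rFw11} (one per lost counter-$\tally$) before its own \eqref{eq-rFw13}. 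Your argument is repairable---let the shadow complete its remaining \eqref{eq-rFw11} steps and then fire \eqref{eq-rFw13}, and observe that the extra steps only lengthen the working segment and the final counter, so the end configuration of the shadow still dominates---but as written the one delicate case is asserted rather than proved. (A smaller omission: you should check that purifying the lossy code $wa_0\tally x$ really lands back in the domain of $R_1$, i.e., yields a pure code whose first block still ends in $a_0$ followed by $\tally$; this holds because $\beta(wa_0)$ is a successor ordinal, but it needs saying, since losses of $\tally$'s can merge blocks and destroy purity.)
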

Note that this holds trivially for $\Fw_0$, thus as in the proof of
Lem.~\ref{lem-FwBwcor}, we can focus on \emph{lossy phases} of form
\begin{equation}\label{eq-lphase}
  (q_\Fw\sep\sep c_0)\mathbin{\Fw}(q\sep w_1\sep
  c_1)\mathbin{\Fw_\embedd}\cdots\mathbin{\Fw_\embedd}(q\sep w_m\sep
  c_{m})\mathbin{\Fw}(q_\Fw\sep\sep c_{m+1})
\end{equation}
for some intermediate state $q$.  We will deal with lossy phases of
$q_{\Fw_1}$ here; the proof of the claim for $q_{\Fw_2}$ is similar.
\begin{proof}[Proof of Claim for $\Fw_1$]
  First observe that in a lossy phase like \eqref{eq-lphase} for
  $q_{\Fw_1}$, because \eqref{eq-rFw10} and \eqref{eq-rFw13} are used
  at the beginning and the end of the phase, each intermediate $q\sep
  w_i\sep c_i$ is necessarily in the language
  \begin{equation}\label{eq-def-L1}\begin{array}{rl}
    L_1\eqdef \{q_{\Fw_1}\sep w\tally^{\ell+1}\sep x\sep
    \tally^{n}a_0^{p}\:&\mid w\in\pure(\Sigma_k^\ks),x\in\pure(\Sigma_{k\tally}^\ks),\ell\geq 0,\\&\; p>0,n+p\geq 2\}\;.
  \end{array}\end{equation}

  Define the \emph{atomic embedding} relation over an alphabet $\Gamma$ as
  \begin{align}\label{eq-def-atomic}
    \atomi&\eqdef\{(ww',waw')\mid a\in\Gamma,w,w'\in\Gamma^\ks\}\;.
    \shortintertext{Clearly,}
    \label{eq-atomic}
    {\embeds}&={\atomi^\trc}\;.
  \end{align}
  Moreover, if $y$ and $y'$ are
  two sequences in $L_1$ with $y\embeds y'$, then we can find
  $y_0,y_1,\dots,y_n$ all in $L_1$ s.t.\ $y=y_0\atomi
  y_1\atomi\cdots\atomi y_n$, i.e.\ we can find suitable atomic embeddings
  while remaining in $L_1$.

  Write $\Fw_{\ref{eq-rFw11}}$ for the subrelation of $\Fw$ defined by
  \eqref{eq-rFw11} and $\Fw_{\ref{eq-rFw13}}$ for that of
  \eqref{eq-rFw13}.  Let us show that these subrelations verify
  \begin{align}\label{eq-commutation}
    ({\atoml}\comp\Fw_{\ref{eq-rFw11}})&\subseteq
    (\Fw_{\ref{eq-rFw11}}\comp{\atoml})
    &({\atoml}\comp\Fw_{\ref{eq-rFw13}})&\subseteq
    (\Fw_{\ref{eq-rFw11}}\comp\Fw_{\ref{eq-rFw13}}\comp{\embedd})
  \end{align}
  over $L_1\times L_1$.  This is immediate in most cases, but there
  is a non-trivial case that justifies the use of transitive
  closures in \eqref{eq-commutation} for \eqref{eq-rFw13}:
  \begin{align*}
    q_{\Fw1}\sep w\tally^{m+1}\sep x\sep \tally a_0^{n+2}
  &\atoml q_{\Fw_1}\sep w\tally^{m+1}\sep x\sep a_0^{n+2}\\
  &\mathrm{\Fw_{\ref{eq-rFw13}}}\;q_\Fw\sep\sep w\tally^{m+1} x\sep\tally^{n+2}
  \shortintertext{should be rewritten into}
    q_{\Fw_1}\sep w\tally^{m+1}\sep x\sep \tally a_0^{n+2}
  &\mathrm{\Fw_{\ref{eq-rFw11}}}\;q_{\Fw_1}\sep w\tally^{m+2}\sep x\sep a_0^{n+3}\\
  &\mathrm{\Fw_{\ref{eq-rFw13}}}\;q_{\Fw}\sep\sep w\tally^{m+2} x\sep\tally^{n+3}\\
  &\embedd q_{\Fw}\sep\sep w\tally^{m+1} x\sep\tally^{n+2}\;.
  \end{align*}
  
  To wrap up the proof of the claim, observe that we can apply
  repeatedly \eqref{eq-commutation} to a lossy phase
  like \eqref{eq-lphase} until we have obtained a proper phase of the
  form
  \begin{equation}
   (q_\Fw\sep\sep c_0)\mathbin\Fw_{\ref{eq-rFw10}}(q_{\Fw_1}\sep w_1\sep
   c_1)\mathbin\Fw_{\ref{eq-rFw11}}\cdots\mathbin\Fw_{\ref{eq-rFw11}}(q_{\Fw_1}\sep
  w'_{m'}\sep c'_{m'})\mathbin\Fw_{\ref{eq-rFw13}}\comp{\embedd}\:(q_\Fw\sep\sep
  c_{m+1})\;.
  \end{equation}
  Therefore, by Lem.~\ref{lem-FwBwcor},
  $c_0\mathbin{(R_1)_\embedd}c_m$ as desired.
\end{proof}

Let us turn to the backward version of the claim:
\begin{claim}
  If $(q_\Bw\sep\sep c)\mathbin{\Bw_\embedd^\trc}(q_\Bw\sep\sep c')$, then
  $c\mathbin{(R^{-1}_H)_\embedd^\trc}c'$.
\end{claim}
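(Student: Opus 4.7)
The plan mirrors that of the forward claim proved just above, and I expect essentially no new ideas to be required. As with $\Fw_0$, the case of $\Bw_0$ is immediate, so we focus on lossy phases at the intermediate states $q_{\Bw_1}$ and $q_{\Bw_2}$; treating $q_{\Bw_1}$ in detail is enough, the other case being entirely analogous. A lossy $q_{\Bw_1}$-phase has the form
\[
(q_\Bw\sep\sep c_0)\mathbin\Bw(q_{\Bw_1}\sep w_1\sep c_1)\mathbin{\Bw_\embedd}\cdots\mathbin{\Bw_\embedd}(q_{\Bw_1}\sep w_m\sep c_m)\mathbin\Bw(q_\Bw\sep\sep c_{m+1})\;,
\]
where the first and last $\Bw$-steps are the inverses of \eqref{eq-rFw13} and \eqref{eq-rFw10} respectively, and every intermediate $\Bw$-step is the inverse of \eqref{eq-rFw11}. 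As in the forward proof, the boundary rules confine every intermediate configuration to the language $L_1$ from \eqref{eq-def-L1}: the entry rule leaves the counter as a block of $a_0$'s, while the exit rule requires both a $\tally$ and an $a_0$ to be present in the counter.

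The heart of the argument is a collection of commutation inclusions analogous to \eqref{eq-commutation}, whose iterated use shifts every atomic loss past the $\Bw$-steps of the phase. For the middle rule, one checks directly (by case analysis on the position of the lost symbol: working-segment $\tally$, counter $\tally$, counter $a_0$, ordinal symbol, or symbol of $w$) that
\[
({\atoml}\comp\Bw)\subseteq(\Bw\comp{\atoml})
\]
holds within $L_1$ whenever the $\Bw$-step is the inverse of \eqref{eq-rFw11}. An inclusion involving the transitive closure of the latter, in the exact spirit of the forward proof's treatment of \eqref{eq-rFw13}, handles the boundary exit rule: an atomic loss that would prevent the exit rule from firing is absorbed by introducing an extra inverse-\eqref{eq-rFw11} step, with the surplus discarded by further losses. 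Applied repeatedly, these inclusions reorganize the whole lossy phase as a proper $\Bw_1$-phase followed by an ${\embedd}$-step, which by Lem.~\ref{lem-FwBwcor} simulates a single $R_1^{-1}$-step. Hence the lossy phase is contained in $R_1^{-1}\comp{\embedd}\subseteq(R_H^{-1})_\embedd$, as required.

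The main obstacle, mirroring the forward proof, is the careful verification of these commutations at the boundary: one must check that each rewritten intermediate configuration remains in $L_1$, and that the awkward cases in which a loss would empty a critical segment (for instance, removing the last $\tally$ from the pre-exit working segment, or depleting the counter $a_0$'s below the minimum required by the exit rule) are absorbed by the transitive-closure inclusion rather than violating the phase structure. The case of $q_{\Bw_2}$ is completely analogous, and the proof of the backward claim then concludes by the same composition argument as in the forward claim.
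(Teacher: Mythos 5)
Your overall strategy---isolate lossy phases, commute atomic losses past the rewrite steps, then invoke Lem.~\ref{lem-FwBwcor}---is the paper's, but your commutation points the wrong way, and this is not a cosmetic choice: it makes the argument fail. You push losses to the \emph{right}, aiming to rewrite a lossy phase from $(q_\Bw\sep\sep c_0)$ as a \emph{reliable} $\Bw_1$-phase from $(q_\Bw\sep\sep c_0)$ followed by ${\embedd}$. By Lem.~\ref{lem-FwBwcor} such a reliable phase exists only if $c_0$ is in the domain of $R_1^{-1}$, i.e.\ only if the leading $\tally$-block of the code matches the counter; a lossy phase can perfectly well start from a $c_0$ violating this. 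Concretely, $(q_\Bw\sep\sep a_1\tally^3a_0\tally\sep\tally^2)\mathbin{\Bw}(q_{\Bw_1}\sep a_1\tally^3\sep a_0\tally\sep a_0^2)\mathbin{\atoml}(q_{\Bw_1}\sep a_1\tally^2\sep a_0\tally\sep a_0^2)\mathbin{\Bw}(q_{\Bw_1}\sep a_1\tally\sep a_0\tally\sep \tally a_0)\mathbin{\Bw}(q_\Bw\sep\sep a_1a_0\tally\tally\sep\tally^2)$ is a legitimate lossy phase, yet no reliable phase from $a_1\tally^3a_0\tally\sep\tally^2$ can complete, so your target decomposition does not exist. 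The local failure is precisely at the exit rule, the inverse of \eqref{eq-rFw10}: its precondition demands \emph{exactly} one $\tally$ in the working segment and \emph{exactly} one $a_0$ in the counter, so a surplus symbol cannot be ``absorbed by an extra inverse-\eqref{eq-rFw11} step''---that step needs two $a_0$'s in the counter, which are not available when the exit rule is about to fire, and would anyway strand the counter with no $a_0$ at all. This is the asymmetry with the forward case: the preconditions of \eqref{eq-rFw11} and \eqref{eq-rFw13} are lower bounds, so extra material can be worked off reliably and discarded afterwards, whereas the backward exit rule's preconditions are exact counts, so extra material must disappear \emph{before} the phase.

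The paper accordingly commutes in the opposite direction, $(\Bw_j\comp{\atoml})\subseteq({\atoml}\comp\Bw_j)$ over $L_1\times L_1$ for $j\in\{\ref{eq-rFw11},\ref{eq-rFw13}\}$---only the entry and middle rules of a backward phase are ever \emph{followed} by a loss inside the phase, so the delicate exit rule never has to be crossed---thereby accumulating all losses at $(q_\Bw\sep\sep c_0)$ and producing ${\embedd}$ \emph{followed by} a reliable phase. Since $(R_H^{-1})_\embedd={\embedd}\comp R_H^{-1}\comp{\embedd}$ tolerates losses on either side, this still yields the claim. Your middle-rule commutation is fine as far as it goes, but the boundary analysis must be redone with the losses moving leftward.
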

\begin{proof}
  We proceed as in the proof of the previous claim, by considering
  lossy phases and transforming them into reliable ones.  Focusing on
  $\Bw_1$, the crux of
  the argument mirrors \eqref{eq-commutation} with
  \begin{align}
    (\Bw_{j}\comp{\atoml})
    &\subseteq({\atoml}\comp\Bw_{j})\;,
  \end{align}
  over $L_1\times L_1$ for $j$ in $\{\ref{eq-rFw11},\ref{eq-rFw13}\}$.
  The cases can be solved rather easily thanks to the restriction to
  $L_1$ defined in \eqref{eq-def-L1}.  For instance,
  \begin{align*}
    q_\Bw\sep\sep w\tally^{m+1}x\sep\tally^{n+2}
    &\mathbin{\Bw_{\ref{eq-rFw13}}}\;q_{\Bw_1}\sep w\tally^{m+1}\sep
      x\sep a_0^{n+2}\\
    &\atoml q_{\Bw_1}\sep w\tally^{m}\sep x\sep a_0^{n+2}
    \shortintertext{necessarily has $m>0$ in order to belong to $L_1$,
      thus can be rewritten into}
    q_\Bw\sep\sep w\tally^{m+1}x\sep\tally^{n+2}
    &\atoml q_\Bw\sep\sep w\tally^{m}x\sep\tally^{n+2}\\
    &\mathbin{\Bw_{\ref{eq-rFw13}}}\;q_{\Bw_1}\sep w\tally^{m}\sep x\sep a_0^{n+2}\;.
  \end{align*}
  Similar arguments can be used to complete the proof.  
\end{proof}

\section{Complexity Bounds}

\subsection{Upper Bounds: Prop.~\ref{prop-up}}\label{ax-upb}
\paragraph{Coverability Algorithm.}  
The algorithm for deciding coverability in WSTS is known as
the \emph{backward coverability} algorithm: given an instance
$\tup{R,w,w'}$ with $w\neq w'$, the algorithm starts with the
upward-closure $\embeds(\{w'\})$ of $w'$ as initial set of potential
targets $I_0$.  The algorithm then builds the set of predecessors
$I_1=I_0\cup R_\embedd^{-1}(I_0)=I_0\cup\embeds(R^{-1}(I_0))$: any
sequence that covers $w'$ has to go through $I_1$.  This process is
repeated with $I_{i+1}=I_i\cup\embeds(R^{-1}(I_{i}))$ until
stabilization, which occurs since upward-closed subsets of a wqo
display the
\emph{ascending chain condition}: there exists $n$
s.t.\ $I_{n+1}=I_n$.  As this $I_n$ contains all the words in
$\Sigma^\ks$ that can cover $w'$, it remains to check whether $w$
belongs to the set or not.
This algorithm is effective because, although the sets $I_i$ are
infinite, they can be represented by their $\embeds$-minimal elements,
which are in finite number thanks to the wqo.

\paragraph{Controlled Sequence.}
When moving from decidability issues to complexity ones, we need to
measure the complexity of basic operations in the previous algorithm.
The key computation here is that of a minimal element $u_{i+1}$ in $I_{i+1}$
given a minimal element $u_i$ of $I_i$.  Since $u_{i+1}$ is minimal,
it is produced from some $v_i\embedd u_i$ s.t.\
$u_{i+1}\mathbin{R}v_i$, i.e.\ $u_i=a_1\cdots a_m$ and
$v_i=v'_0a_1v_1\cdots v'_{m-1}a_mv'_m$ for some $a_j$ in $\Sigma$ and
$v'_j$ in $\Sigma^\ks$.

Given $\?T=\tup{Q,\Sigma,\Sigma,\delta,I,F}$ a normalized transducer
for $R$, we know the accepting run with $v_i$ as image is of form
\begin{equation}
  q_0\xrightarrow{(u'_0,v'_0)}q'_0\xrightarrow{(\varepsilon,a_1)}q_1\xrightarrow{(u'_1,v'_1)}q'_1\cdots
  q_{m-1}\xrightarrow{(u'_{m-1},v'_{m-1})}q'_{m-1}\xrightarrow{(\varepsilon,a_m)}q_m\xrightarrow{(u'_m,v'_m)}q'_m
\end{equation}
with $q_0$ in $I$, $q'_m$ in $F$, and $u_{i+1}=u'_0u'_1\cdots
u'_{m-1}u'_m$ as input.  Then none of the segments
$q_j\xrightarrow{(u'_j,v'_j)}q'_j$ can have length greater than $|Q|$,
or $u_{i+1}$ would not be a minimal element of $I_{i+1}$.  Therefore,
$|u_{i+1}|\leq |Q|\cdot(|u_i|+1)$, and any $u_{i+1}$ can be computed
in \textsc{NLogSpace}.

\begin{proof}[Proof of Prop.~\ref{prop-up}] The idea of our
  \emph{combinatory algorithm} is to derive an upper bound on the
  length of a sequence proving reachability.  A nondeterministic
  algorithm can then explore this search space for a witness.

  Assume the $(k+2)$-LR[Rat] instance to be positive.  We consider now
  a sequence of upward-closed sets $\embeds(\{w'\})=I_0\subsetneq
  I_1\subsetneq \cdots\subsetneq I_\ell$ such that $w$ is in $I_\ell$
  but not in $I_i$ for any $i<\ell$, i.e.\ we do not wait for
  saturation of the $I_i$'s but stop as soon as $w$ appears.  We can
  extract a particular minimal element $u_{i+1}$ in each
  $I_{i+1}\setminus I_i$.  Let $g(x)=|Q|\cdot (x+1)$; by the previous
  analysis, $|u_{i+1}|\leq g(|u_i|)$, and of course $|u_0|=|w'|$.  The
  sequence $u_0,u_1,\dots,u_\ell$ is a \emph{bad sequence}: for all
  $i<j$, $u_i\not\embeds u_j$.  By the Length Function
  Theorem~\citep{SS-icalp2011}, the length $\ell$ is bounded by the
  \emph{Cich\'on function} $h_{\omega^{\omega^{k+1}}}((k-1)|w'|)$
  relativized with $h(x)=x\cdot g(x)=|Q|x^2+|Q|x$.

  A nondeterministic algorithm can then set $w_0=w'$ and guess one by
  one a sequence of $\ell$ words $w_i$ in $I_i$ with $w_{i+1}$ in
  $\embeds(R^{-1}(\{w_i\}))$ until $w_\ell\embeds w$.  The space
  required at each step is logarithmic in $|w_i|$, which is bounded
  overall by the Hardy function $h^{\omega^{\omega^{k+1}}}((k-1)|w'|)$
  for the same relativized $h$.

  Finally, given an instance $\tup{R,w,w'}$ of $(k+2)$-LR[Rat] of size
  $n$, as $n\geq|Q|$ we can use $h(x)=x^3+x^2$ instead and bound the
  required space of each step by $h^{\omega^{\omega^{k+1}}}(n)$. The
  space requisites of this algorithm place it in
  $\F_{\omega^{\omega^{k+1}}}$, as the function $h$ is polynomial.
\end{proof}

\section{Applications}

\subsection{Lossy Termination: Prop.~\ref{prop-term}}\label{ax-term}

We prove in this section Prop.~\ref{prop-term}: $(k+2)$-LT[Rat] is 
$\F_{\omega^k}$-hard.

\begin{proof}[Proof Sketch of Prop.~\ref{prop-term}]
  We need for this proof to examine more carefully the construction
  in \autoref{sub-low}.  The following facts are decisive:
  \begin{enumerate}
  \item both $\Fw$ in Phase~\ref{c-forward} and $\Bw$ in
    Phase~\ref{c-backward} are terminating relations,
  \item the simulation of the semi-Thue system $\Upsilon$ in
    Phase~\ref{c-simul} can be
    carried instead with a ``time budget'': it employs sequences of the form
    $\gamma\sep\tally^t$, where $\gamma$ encodes a sequence of $\Seqs$ and
    $t$ tells how many steps are still allowed---initially the same
    budget allocated by Phase~\ref{c-forward}, but
    decremented by $1$ at each rewrite.  This allows to simulate a
    time-bounded semi-Thue system instead of a space-bounded one, but
    they are equivalent as far as $\F_{\omega^k}$ is concerned.
  \end{enumerate}

  Let us detail a bit further the changes we carry.  The new relation
  $R'$ has to be modified to work on words in
  $\Seqs\cdot\{\sep\}\cdot\{\tally\}^\ks$.  The relation $\Fw'$ for
  Phase~\ref{c-forward} needs
  to duplicate its counter increments on both sides of the last separator
  $\sep$ in \eqref{eq-rFw0}, which becomes
  \begin{align}
  q_\Fw\sep\sep\tally x\sep\tally^n\sep\tally^{n}
  &\;\mathbin{\Fw'_0}\;q_\Fw\sep\sep x\sep\tally^{n+1}\sep\tally^{n+1}\;.
  \intertext{The other cases of $\Fw'$ are based on those of
  $\Fw$ and additionally duplicate the contents after
  the last ``$\sep$'': for instance, for \eqref{eq-rFw11}:}
  q_{\Fw_1}\sep w\tally^m\sep x\sep\tally^{n+1}a_0^{p+1}\sep z
  &\;\mathbin{\Fw'_1}\;q_{\Fw_1}\sep w\tally^{m+1}\sep
  x\sep\tally^{n}a_0^{p+2}\sep z\;.
\end{align}
  {The simulation relation $\Sim'$ for Phase~\ref{c-simul}
  now decrements the time budget:}
\begin{align}
  \Sim'&\eqdef \Sim\cdot\begin{bmatrix}\sep\\\sep\end{bmatrix}\cdot\begin{bmatrix}\tally\\\tally\end{bmatrix}^{\!\ks}\cdot\begin{bmatrix}\tally\\\varepsilon\end{bmatrix}\;.
  \intertext{The other relations can be taken to simply preserve the
  time budget:}
  \Bw'&\eqdef\Bw\cdot\begin{bmatrix}\sep\\\sep\end{bmatrix}\cdot\begin{bmatrix}\tally\\\tally\end{bmatrix}^{\!\ks}\;,\\
  \Init'&\eqdef\Init\cdot\begin{bmatrix}\sep\\\sep\end{bmatrix}\cdot\begin{bmatrix}\tally\\\tally\end{bmatrix}^{\!\ks}\;,\\
  \Fin'&\eqdef\Fin\cdot\begin{bmatrix}\sep\\\sep\end{bmatrix}\cdot\begin{bmatrix}\tally\\\tally\end{bmatrix}^{\!\ks}\;.
  \shortintertext{We add a new relation $\mathsf{End}$ that enters an
  infinite loop if the full simulation has been carried:}
  \mathsf{End}&\eqdef\left(\begin{bmatrix}q_\Bw\sep\sep
  a_{k-1}^n\tally\sep\tally^n\sep\\q_{\mathsf{End}}\sep\sep
  a_{k-1}^n\tally\sep\tally^n\sep\end{bmatrix}\cdot\begin{bmatrix}\tally\\\tally\end{bmatrix}^{\!\ks}\right)
  + \left(\begin{bmatrix}q_{\mathsf{End}}\\q_{\mathsf{End}}\end{bmatrix}\cdot\mathrm{Id}_{\Sigma_{k\tally}\uplus\{\sep\}}^\ks\right)\,.
  \intertext{Finally, the source sequence becomes}
  w&\eqdef q_\Fw\sep\sep a_{k-1}^n\tally\sep\tally^n\sep\tally^n\;.
  \end{align}
  The reader can check that the defined relation $R'$ is 1-bld and
  rational, and that the constructed instance $\tup{R',w}$ terminates
  iff the R[Rewr] instance $\tup{\Upsilon,y,y'}$ was positive.
\end{proof}

\subsection{Lossy Channel Systems: Prop.~\ref{prop-lcs}}\label{ax-lcs}

We prove in this section Prop.~\ref{prop-lcs}: $(k+2)$-LCS is
$\F_{\omega^k}$-hard.
\begin{proof}
We reduce from a $(k+2)$-LR[Rat] instance $\tup{R,w,w'}$ and use
Prop.~\ref{prop-lrr}.  Let $\$$ be a fresh symbol and
$\?T=\tup{Q,\Sigma,\Sigma,\delta,I,F}$ the normalized finite
transducer for $R$.

We construct a LCS
$\?C=\tup{Q\uplus\{q_i,q_f\},\Sigma\uplus\{\$\},\delta'}$ that cycles
through its channel content: it starts with $w\$$ as initial channel
contents in some initial state of $\?T$, applies the transitions
$(q,u,v,q')$ of $\?T$ by reading $u$ from the channel and writing $v$
through a transition $(q,?u!v,q')$ of $\delta'$, and cycles back upon
reading $\$$ by transitions $(q,?\$!\$,q')$ in $\delta'$ for all
initial states $q'\in I$ and final states $q\in F$ of $\?T$.  Adding
to $\delta'$ the transitions $(q_i,\varepsilon,q)$ for $q$ in $I$ and
$(q,\varepsilon,q_f)$ for $q$ in $F$, then $(w,w')$ belongs to
$R_{\embedd}^\trc$ iff $q_i,w\$\rightarrow_\embedd^\trc q_f,w'\$$ in
$\?C$.  As in the proof of Prop.~\ref{prop-ratep}, we can tighten this
construction by reusing $\sep$ for $\$$.
\end{proof}

\end{document}